\newtheorem{theorem}{Theorem}
\newtheorem{definition}[theorem]{Definition}
\newtheorem{proposition}[theorem]{Proposition}
\newtheorem{remark}[theorem]{Remark}
\newenvironment{proof}[1][Proof]{\noindent\textbf{#1.} }{\ \rule{0.5em}{0.5em}}
\newcommand{\wt}[1]{\widetilde{#1}}
\newcommand{\Xn}[1]{X^{\left( #1 \right)} }
\newcommand{\Xtn}[1]{\wt X^{\left( #1 \right)} }
\newcommand{\X}[2]{X^{\left( #1 \right)} (#2)}
\newcommand{\Xt}[2]{\wt{X}^{\left( #1 \right)} (#2)}
\newcommand{\angl}[1]{\left\langle #1 \right\rangle}
\begin{document}
\begin{center}
{\Large{\bf Supersymmetric generalized power functions}}\\
\bigskip 
Mathieu Ouellet\footnote{\texttt{mathieu.ouellet@uqtr.ca}} and S\'{e}bastien Tremblay\footnote{\texttt{sebastien.tremblay@uqtr.ca}}\\
\bigskip 
D\'epartement de math\'ematiques et d'informatique, \\ Universit\'e du Qu\'ebec, Trois-Rivi\`{e}res, QC, G9A 5H7, Canada\\
\bigskip \bigskip

{\bf Abstract}
\end{center}
Complex-valued functions defined on a finite interval $[a,b]$ generalizing power functions of the type $(x-x_0)^n$ for $n\geq 0$ are studied. These functions called $\Phi$-generalized powers, $\Phi$ being a given nonzero complex-valued function on the interval, were considered to contruct a general solution representation of the Sturm-Liouville equation in terms of the spectral parameter \cite{kravchenko2008, kravporter2010}. The $\Phi$-generalized powers can be considered as a natural basis functions for the one-dimensional supersymmetric quantum mechanics systems taking $\Phi=\psi_0^2$, where the function $\psi_0(x)$ is the ground state wave function of one of the supersymmetric scalar Hamiltonians. Several properties are obtained such as $\Phi$-symmetric conjugate and antisymmetry of the $\Phi$-generalized powers, a supersymmetric binomial identity for these functions, a supersymmetric Pythagorean elliptic (hyperbolic) identity involving four $\Phi$-trigonometric ($\Phi$-hyperbolic) functions as well as a supersymmetric Taylor series expressed in terms of the $\Phi$-derivatives. We show that the first $n$ $\Phi$-generalized powers are a fundamental set of solutions associated with a nonconstant coefficients homogeneous linear ordinary differential equations of order $n+1$. Finally, we present a general solution representation of the stationary Schr\"odinger equation in terms of geometric series where the Volterra compositions of the first type is considered.

\bigskip\bigskip
\noindent \textbf{Keywords}: Stationary Schr\"odinger equation, supersymmetric quantum mechanics, spectral parameter power series, generalized Taylor series, Volterra composition


\newpage

\section{Introduction}
Spectral parameter power series (SPPS) method was first introduced in 2008 \cite{kravchenko2008} and subsequently generalized in 2010 from a different approach \cite{kravporter2010}. This method has been widely used in the last ten years. One of the reasons is that SPPS has implications from both a mathematical and a physical point of view. Indeed, as illustrated in the following theorem, SPPS consists of a representation for the solutions of the Sturm-Liouville equation as a spectral parameter. The range of applications is therefore considerable.

\begin{theorem}
\label{kravch}
{\bf \cite{kravporter2010}}
Assume that on a finite real interval $[a,b]$, equation $(pu_0')'+qu_0 =0 $ possesses a particular solution $u_0$ such that $u_0^2r$ and $1/(u_0^2 p)$ are continuous
on $[a, b]$. Then the general solution of  $(pu')'+qu=\lambda ru$ on $(a, b)$ has the form 
\begin{equation}
\label{gensol1}
u = c_1u_1 + c_2u_2
\end{equation}
where $c_1$  and $c_2$ are arbitrary complex constants,
\begin{equation}
\label{gensol2}
u_1 =u_0 \sum_{k=0}^{\infty}\frac{\lambda^k }{(2k)!} \Xtn{2k}, \quad \text{and} \quad u_2 = u_0\sum_{k=0}^{\infty} \frac{\lambda^k}{(2k+1)!} \Xn{2k+1}
\end{equation}
with $\Xtn{n}$ and $\Xn{n}$  being defined by the recursive relations

\begin{equation*}
\label{X0}
\Xtn{0} \equiv 1 , \qquad  \Xn{0}\equiv 1,
\end{equation*}
\begin{equation}
\label{XGtdef}
\begin{array}{rcl}
\X{n}{x_0,x}&=&\left\{
                \begin{array}{ll}
                  \displaystyle n\int_{x_0}^x \X{n-1}{x_0,\xi}\frac{1}{u_0^2(\xi)p(\xi)}d\xi,&  n\text{ odd}, \\[2ex]
                   \displaystyle n\int_{x_0}^x \X{n-1}{x_0,\xi}u_0^2(\xi)r(\xi)d\xi,&  n\text{ even},
                \end{array}
              \right.\\ \\

\Xt{n}{x_0,x}&=&\left\{
                \begin{array}{ll}
                  \displaystyle n\int_{x_0}^x \Xt{n-1}{x_0,\xi}u_0^2(\xi)r(\xi)d\xi,& n\text{ odd}, \\*[2ex]
                   \displaystyle n\int_{x_0}^x \Xt{n-1}{x_0,\xi}\frac{1}{u_0^2(\xi)p(\xi)}d\xi,&    n\text{ even},
                \end{array}
              \right.
\end{array}
\end{equation}
where $x_0$ is an arbitrary point in $[a,b]$ such that $p$ is continuous at $x_0$ and $p(x_0)\neq 0$. The two series in (\ref{gensol2}) converge uniformly on $[a,b]$.
\end{theorem}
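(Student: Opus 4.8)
The plan is to reduce the spectral equation to a self-adjoint normal form, recognize the two stated series as its canonical power-series solutions in $\lambda$, and then supply the analytic estimates that turn the formal identities into genuine ones. First I would substitute $u=u_0v$ in $(pu')'+qu=\lambda r u$. Expanding and using the hypothesis $(pu_0')'+qu_0=0$ together with the identity $p u_0' v'+(pu_0 v')'=u_0^{-1}(pu_0^2 v')'$, all terms not involving a derivative of $v$ cancel and the equation becomes the normal form $(pu_0^2v')'=\lambda\, r u_0^2 v$ on $(a,b)$. The hypotheses force $u_0$ to be nonvanishing on $[a,b]$ (otherwise $1/(u_0^2p)$ could not be continuous there), so this change of variable is licit throughout, and the two coefficients now appearing, $r u_0^2$ and $1/(u_0^2 p)$, are exactly those entering the recursions \eqref{XGtdef}.

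Next I would verify termwise that $v_1:=\sum_{k\ge 0}\lambda^k\,\Xtn{2k}/(2k)!$ and $v_2:=\sum_{k\ge 0}\lambda^k\,\Xn{2k+1}/(2k+1)!$ solve this normal form. Differentiating $v_1$ and using $(\Xtn{2k})'=2k\,\Xtn{2k-1}/(u_0^2 p)$ gives $p u_0^2 v_1'=\sum_{k\ge 1}\lambda^k\,\Xtn{2k-1}/(2k-1)!$; differentiating once more with $(\Xtn{2k-1})'=(2k-1)\,\Xtn{2k-2}\,u_0^2 r$ yields precisely $\lambda\, r u_0^2 v_1$, so $v_1$ is a formal solution and hence $u_1=u_0 v_1$ solves the original equation. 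The computation for $v_2$ is identical with the two recursion coefficients interchanged, and along the way one records $p u_0^2 v_2'=\sum_{k\ge 0}\lambda^k\,\Xn{2k}/(2k)!$, needed below.

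To make these manipulations rigorous I would bound the iterated integrals. With $M:=\max_{[a,b]}\{\,|r u_0^2|,\ |1/(u_0^2 p)|\,\}$, finite by the continuity assumptions, an induction on $n$ using \eqref{XGtdef} gives $|\Xn{n}(x_0,x)|\le (M|x-x_0|)^n$ and the same estimate for $\Xtn{n}$ — the factor $n$ in the recursion is exactly absorbed by the integration. The Weierstrass $M$-test then yields absolute and uniform convergence on $[a,b]$ of the series for $v_1,v_2$, and, applied to their first and second termwise derivatives (which obey analogous bounds), justifies term-by-term differentiation inside the operator $(p\,(\cdot)')'$. Multiplying by the continuous, bounded function $u_0$ gives the uniform convergence of the series in \eqref{gensol2} and shows $u_1,u_2$ are bona fide solutions on $(a,b)$.

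Finally I would show $\{u_1,u_2\}$ is a fundamental system. Since $\Xn{n}(x_0,x_0)=\Xtn{n}(x_0,x_0)=0$ for $n\ge 1$, evaluation at $x_0$ gives $u_1(x_0)=u_0(x_0)$, $u_2(x_0)=0$, and from the expressions for $p u_0^2 v_i'$ above, $(pu_1')(x_0)=(pu_0')(x_0)$ and $(pu_2')(x_0)=1/u_0(x_0)$, where $u_0(x_0)\neq 0$ because $1/(u_0^2p)$ is continuous at $x_0$ with $p(x_0)\neq 0$. The generalized Wronskian $p\,(u_1 u_2'-u_1'u_2)$ is constant (Abel's identity for the self-adjoint form) and equals $u_0(x_0)\cdot u_0(x_0)^{-1}=1$ at $x_0$; being nonzero, $u_1$ and $u_2$ are linearly independent, and since the solution space of the second-order linear equation is two-dimensional, \eqref{gensol1} is the general solution. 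I expect the only real work to be the convergence estimate of the third paragraph together with the care needed to differentiate the series twice under $(p\,(\cdot)')'$; the recursion-based verification is routine once the normal form is in place.
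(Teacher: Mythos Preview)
The paper does not supply its own proof of this statement: Theorem~\ref{kravch} is quoted verbatim from \cite{kravporter2010} as a preliminary result, with no argument given. So there is nothing in the present paper to compare your attempt against.

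That said, your proof is correct and is essentially the standard argument one finds in the original source. The reduction $u=u_0v$ leading to the Polya normal form $(pu_0^2v')'=\lambda r u_0^2 v$, the termwise verification via the recursions, the bound $|\Xn{n}(x_0,x)|\le (M|x-x_0|)^n$ (with the factor $n$ in the recursion exactly cancelling against the integration, as you note), and the Wronskian computation at $x_0$ are all accurate and cleanly executed. One small remark: your claim that the termwise-differentiated series ``obey analogous bounds'' deserves one more line --- the derivative of $\Xn{n}/n!$ picks up a factor of $u_0^2 r$ or $1/(u_0^2p)$, bounded by $M$, and drops to $\Xn{n-1}/(n-1)!$, so the same $M$-test applies; this is routine but worth spelling out since it is, as you yourself identify, the only genuinely analytic step.
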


From the mathematical point of view, SPPS was considered for the study of the transmutation operators (see \cite{begehr, carroll, colton, levitan, marchenko}) in \cite{kravchenko2018, kravchenko2016, kravchenko2013, kravchenko2015}, for the Sturm-Liouville or the Hill equations \cite{blancarte, erbe, khmelnytskaya2014, khmelnytskaya2010, porter2016} as well as for the study of perturbed Bessel equations \cite{castillo2013, castillo2018}.

From the physical point of view, SPPS method was used for acoustic \cite{rabinovich2015}, the heat transfer problem \cite{khmelnytskaya2013}, the one-dimensional quantum scattering \cite{rabinovich2007} as well as optical fibers \cite{castillo2015}. However, SPPS was first derived using pseudoanalytic function theory in \cite{kravchenko2008} and it has has been shown that this theory is closely related to the two-dimensional supersymmetric quantum mechanics (SUSYQM)~\cite{bilodeautremblay}. 

In recent years, SPPS has also proved to be very efficient numerically, see for instance \cite{blancarte, castillo2015, castillo2013, erbe, han2016, khmelnytskaya2012, khmelnytskaya2014, khmelnytskaya2013, kravchenko2008, kravchenko2018, kravchenko2016, kravporter2010, kravchenko2015, castillo2018}. Nevertheless, in this paper we study the generalized power functions $X^{(n)},\wt X^{(n)}$ appearing in the SPPS method from the analytical point of view. Indeed, despite the extensive use of these functions in the  mathematical and  the physical litterature in recent years and their fundamental role in the SPPS method, these generalized power functions have not been studied much analytically; this is the main objective of this paper.

In Section 2, we present an overview of the one-dimensional SUSYQM. The purpose of this section is mainly to illustrate the fundamental role played by the ground state wave function $\psi_0(x)$ in these systems. Indeed, we show that despite the function $1/\psi_0(x)$ cannot be  considered as an eigenfunction, the symmetry $\psi_0(x)\rightarrow 1/\psi_0(x)$ acts as a permutation of the bosonic and the fermionic parts of the SUSYQM system. In Section~3, considering this symmetry the $\Phi$-generalized power functions are formally introduced in agreement with the SPPS method (Theorem \ref{kravch}). The principle of alternation $\Phi(x)$ and $1/\Phi(x)$ in the definition of the $\Phi$-generalized power functions $X^{(n)},\wt X^{(n)}$ appears according to the parity of $n$, where $\Phi(x)=\psi_0^2(x)$ for the SUSYQM system. Many properties of the $\Phi$-generalized power functions are given: 1) the symmetries of the variables permutation of these functions which, among other things, will prevent many numerical calculations for the SPPS method, 2) a binomial identity and 3) a generalized trigonometric Pythagorean identity. Generalized Taylor series are developed in Section 4 for the real-valued functions $\Phi(x)$. Section 5 is devoted to the relations between the Volterra composition of the first type and the generalized power functions. These relationships allow us to construct a general solution representation of the stationary Schr\"odinger equation in terms of the Volterra composition on a finite interval.

\section{One-dimensional supersymmetric quantum \\ mechanics}
\subsection{General results and overview}
Let us consider the simplest one-dimensional SUSYQM system characterised by the existence of charge operators $\mathcal Q,\mathcal Q^\dagger$ that obey the algebra \cite{witten}
\begin{equation}
\label{susyalgebra}
\{\mathcal Q, \mathcal Q^\dagger\}= \mathcal Q \mathcal Q^\dagger+ \mathcal Q^\dagger \mathcal Q=\mathcal H,\qquad \mathcal Q^2=0,\qquad \mathcal Q^{\dagger 2}=0.
\end{equation}
From these equations it is easy to see that $[\mathcal Q,\mathcal H]=0,\ [\mathcal Q^\dagger,\mathcal H]=0$.
The algebra defined in \eqref{susyalgebra} can be achieved by considering the matrices $\mathcal Q=\footnotesize{\left(
\begin{array}{cc}
0 & 0 \\
\mathcal A & 0
\end{array}
\right)}$ and $\mathcal Q^\dagger=\footnotesize{\left(
\begin{array}{cc}
0 & \mathcal A^\dagger \\
0 & 0
\end{array}
\right)}$, where $\mathcal A$ is a linear operator and $\mathcal A^\dagger$ is the adjoint. Combining these matrices with algebra \eqref{susyalgebra} leads to the supersymmetric Hamiltonian $\mathcal H=\footnotesize{\left(
\begin{array}{cc}
\mathcal H_1 & 0 \\
0 & \mathcal H_2
\end{array}
\right)}$, where $\mathcal H_1= \mathcal A^\dagger \mathcal A$ and $\mathcal H_2= \mathcal A \mathcal A^\dagger$. These scalar Hamiltonians $\mathcal H_1, \mathcal H_2$ are both positive semi-definite operators with eigenvalues greater than or equal to zero.

Let us choose the ground state energy of $\mathcal H_1$ to be zero. 
Considering the nodeless ground state wave function $\psi_0^{(1)}(x)$ belonging to the Hamiltonian $\mathcal H_1$, the Schr\"odinger equation
$$
\mathcal H_1\psi_0^{(1)}(x)=-\frac{d^2}{dx^2}\psi_0^{(1)}(x)+V_1(x)\psi_0^{(1)}(x)=0
$$
gives us the possibility to obtain the potential $V_1$ in terms of the ground state wave function $\psi_0^{(1)}$:
$$
V_1(x)=\frac{[\psi_0^{(1)}(x)]^{\prime\prime}}{\psi_0^{(1)}(x)}.
$$
It is now simple to factorize the Hamiltonian $H_1$ as $\mathcal A^\dagger \mathcal A$ for $\mathcal A=\frac{d}{dx}+W(x)$ and $\mathcal A^\dagger=-\frac{d}{dx}+W(x)$, where the function $W(x)$ is generally refered to as the {\em superpotential}. Hence, by using the operators $\mathcal A$ and $\mathcal A^\dagger$ it is possible to write $V_1(x)$ in terms of the superpotential:
\begin{equation}
\label{V1}
V_1(x)=W^2(x)-W^\prime(x).
\end{equation}
One solution for $W(x)$ in terms of the ground state wave function of $\mathcal H_1$ is
\begin{equation}
\label{superpot}
W(x)=-\frac{[\psi_0^{(1)}(x)]^\prime}{\psi_0^{(1)}(x)}.
\end{equation}
This solution is obtained by recognizing that once we satisfy $\mathcal A\psi_0^{(1)}=0$, i.e. 
$$
\psi_0^{(1)}(x)=N\exp\left(-\int_{-\infty} ^ x W(\xi)d\xi\right),\quad N\text{ a normalization constant,}
$$
we automatically have a solution to $\mathcal H_1\psi_0^{(1)}= \mathcal A^\dagger \mathcal A\psi_0^{(1)}=0$.

Let us now consider the other scalar Hamiltonian $\mathcal H_2= \mathcal A \mathcal A^\dagger$ obtained by reversing the order of $\mathcal A$ and $\mathcal A^\dagger$ in $\mathcal H_1$. A little simplification shows that the operator $\mathcal H_2$ is in fact a Hamiltonian corresponding to a new potential $V_2(x)$:
\begin{equation}
\label{V2}
\mathcal H_2=-\frac{d^2}{dx^2}+V_2(x),\qquad V_2(x)=W^2(x)+W^\prime (x).
\end{equation}
The energy eigenvalues and the wave functions of the Hamiltonians $\mathcal H_1$ and $\mathcal H_2$ are related in the following way. 
For $n\geq 0$ we have
$$
E_0^{(1)}=0,\qquad E_n^{(2)}=E_{n+1}^{(1)},
$$
$$
\psi_n^{(2)}=\big[E_{n+1}^{(1)} \big]^{-\frac{1}{2}} \mathcal A\psi_{n+1}^{(1)}, \quad \text{and}\quad \psi_{n+1}^{(1)}=\big[E_{n}^{(2)} \big]^{-\frac{1}{2}} \mathcal A^\dagger \psi_{n}^{(2)}.
$$

\subsection{The bosonic-fermionic permutation symmetry from the ground state wave function}

Let us consider a symmetry of the supersymmetric Hamiltonian. We observe that from equations \eqref{V1} and \eqref{V2} the Hamiltonains $\mathcal H_1$ and $\mathcal H_2$ can be expressed in terms of the superpotential $W(x)$ and its first derivative. In other words, from equation \eqref{superpot}  the partner Hamiltonians $\mathcal H_1$ and $\mathcal H_2$ depend on the ground state wave function $\psi_0^{(1)}(x)$ of $\mathcal H_1$. Introducing now the operator $\mathcal R_{\psi_0^{(1)}}$ which transforms $\psi_0^{(1)}(x)\rightarrow 1/\psi_0^{(1)}(x)$ we find
$$
\mathcal R_{\psi_0^{(1)}}[\psi_0^{(1)}]=\wt \psi_0^{(1)}=\frac{1}{\psi_0^{(1)}},\quad \mathcal R_{\psi_0^{(1)}}[W]=\wt W=-W,
$$
and
$$
\mathcal R_{\psi_0^{(1)}}[\mathcal H_1]=\wt {\mathcal H}_1=\mathcal H_2,\quad \mathcal R_{\psi_0^{(1)}}[\mathcal H_2]=\wt {\mathcal H}_2=\mathcal H_1,
$$
where the tilde notation will be used in what follows to represent the functions or the operators on which the operator $\mathcal R_{\psi_0^{(1)}}$ is applied. Hence, the operator $\mathcal R_{\psi_0^{(1)}}$ is just a permutation of the two scalar Hamiltonians  $\mathcal H_1$ and $\mathcal H_2$ in the supersymmetric Hamiltonian $\mathcal H$.

In particular, we note that since $\mathcal A\psi_0^{(1)}(x)=0$, we have
$$
\mathcal R_{\psi_0^{(1)}}[\mathcal A\psi_0^{(1)}(x)]=\wt {\mathcal A}\,\wt \psi_0^{(1)}(x)=-\mathcal A^\dagger \left(1/\psi_0^{(1)}(x)\right)=0,
$$
i.e. $\mathcal H_2\big(1/\psi_0^{(1)}(x)\big)=0$. However, if we suppose that $\psi_0^{(1)}(x)$ is normalized, then the function $1/\psi_0^{(1)}(x)$ cannot be normalized. Therefore, the function $1/\psi_0^{(1)}(x)$ is not considered as the ground state wave function of $\mathcal H_2$; this implies that $E_0^{(1)}$ is the only nondegenerate state of the supersymmetric Hamiltonian $\mathcal H$. 

Despite that  $1/\psi_0^{(1)}(x)$ is not a ground state wave function of $\mathcal H_2$, the symmetry induced by $\mathcal R_{\psi_0^{(1)}}$ remains for the partner Hamiltonians for all the eigenvalues $E_n^{(2)}=E_{n+1}^{(1)}$ of the system for $n\geq 0$. Moreover, the SUSYQM system is completely determined by the ground state wave function of $\mathcal H_1$.

In what follows, we will study complex-valued functions generalizing powers functions of the type $(x-x_0)^n$ for $n\geq 0$, taking into account the permutation symmetry $\mathcal R_{\psi_0^{(1)}}$ of the one-dimensional SUSYQM system.

\section{Definition and properties of the $\Phi$-generalized power functions}
In the frame work of this paper, in order to simplify the problem, we shall study the functions $X^{(n)}$ and $\widetilde X^{(n)}$ appearing in the representation for the general solution of $u''+qu=\lambda u$, i.e. for $p=r=1$ in Theorem  \ref{kravch}. This leads us to the following definition.

\begin{definition}
\label{defX}
Let $\Phi (x)$ be a given continuous nonzero complex-valued function on $[a,b]$ and $x_0\in [a,b]$. The $\Phi $-power functions (or the generalized power functions) are defined iteratively by $\Xt{0}{x_0,x}\equiv 1$, $\X{0}{x_0,x}\equiv 1$ and
\begin{equation}
\label{Xtdef}
\begin{array}{l}
\X{n}{x_0,x}=n\displaystyle\int_{x_0}^x \X{n-1}{x_0,\xi}\Big(\Phi (\xi)\Big)^{(-1)^n}d\xi, \\*[3ex]
\Xt{n}{x_0,x}=n\displaystyle\int_{x_0}^x \Xt{n-1}{x_0,\xi}\left(\frac{1}{\Phi (\xi)}\right)^{(-1)^n}d\xi,
\end{array}
\end{equation}
where $n>0$. We say that the $\Phi $-power function $\widetilde X^{(n)}$ is the $\Phi $-conjugate of $X^{(n)}$ (and conversely).
\end{definition}
\begin{remark}
We observe that the $\Phi $-conjugaison of the generalized power functions are obtained from the transformation $\mathcal R_{\Phi}[X^{(n)}] = \wt X^{(n)}$.
\end{remark}

\begin{remark}
 In the case $\Phi \equiv 1$ we have $\Xt{n}{x_0,x}=\X{n}{x_0,x}$ and these functions are equal to $(x-x_0)^n$. In the general case, the $\Phi$-power functions  $X^{(n)}$, $\widetilde X^{(n)}$ are distinct and not necessarily polynomial functions.
\end{remark}

\label{remsch}
When considering the general solution representation of the one-dimensional stationary Schr\"odinger equation $\mathcal H_1\psi(x)=-\frac{d^2}{dx^2}\psi(x)+V_1(x)\psi(x)=\lambda\psi(x)$
from the point of view of Theorem \ref{kravch}, i.e. $p=-1$, $r=1$ and $q=V_1=[\psi_0^{(1)}]^{\prime\prime}/\psi_0^{(1)}$, then the generalized power functions $X_{1}^{(n)}$, $\wt X_{1}^{(n)}$ associated with this Schr\"odinger equation in Theorem \ref{kravch} are related to the generalized power functions $X^{(n)}$, $\wt X^{(n)}$ in \eqref{Xtdef} by taking $\Phi=[\psi_0^{(1)}]^2$. Since $p=-1$ we have $X_1^{(n)}=(-1)^n X^{(n)}$ and $\wt X_1^{(n)}=(-1)^{n+1} \wt X^{(n)}$. Hence, from \eqref{gensol1} and \eqref{gensol2} the general solution representation of $\mathcal H_1\psi(x)=\lambda \psi(x)$ can be written in terms of the functions $X^{(n)}$, $\wt X^{(n)}$ (up to a minus sign absorbed by the arbitrary complex constants $c_1,c_2$) defined in \eqref{Xtdef} by
$$
\psi=c_1 \psi_0 \sum_{k=0}^{\infty}\frac{\lambda^k }{(2k)!} \Xtn{2k}+c_2 \psi_0\sum_{k=0}^{\infty} \frac{\lambda^k}{(2k+1)!} \Xn{2k+1}.
$$

Applying now operator $\mathcal R_{\Phi}$ on equation $\mathcal H_1\psi(x)=\lambda \psi(x)$, we obtain the general solution representation $\wt \psi(x)$ of the supersymmetric partner Schr\"odinger equation $\mathcal H_2\wt \psi(x)=\lambda \wt \psi(x)$, i.e.
$$
\wt \psi=\frac{\wt c_1}{\psi_0} \sum_{k=0}^{\infty}\frac{\lambda^k }{(2k)!} \Xn{2k}+\frac{\wt c_2}{\psi_0}\sum_{k=0}^{\infty} \frac{\lambda^k}{(2k+1)!} \Xtn{2k+1}.
$$


\begin{theorem}
\label{symmX}
Let $X^{(n)}$ and $\widetilde X^{(n)}$ be $\Phi $-generalized powers. These functions are $\Phi $-symmetric conjugate for even $n$ and antisymmetric for odd $n$, i.e.
\begin{equation}
\label{conjRel1}
\Xt{n}{x_0,x} = \X{n}{x,x_0}, \qquad n \text{ even}
\end{equation}
and
\begin{equation}
\label{conjRel2}
\begin{array}{ll}
\Xt{n}{x_0,x}= -\Xt{n}{x,x_0}, & n \text{ odd},\\[2ex]
\X{n}{x_0,x}= -\X{n}{x,x_0}, & n \text{ odd}.
\end{array}
\end{equation}
\end{theorem}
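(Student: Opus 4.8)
The plan is to prove \eqref{conjRel1}--\eqref{conjRel2} by induction on $n$, viewing $X^{(n)}$ and $\widetilde X^{(n)}$ as functions of their second argument for a fixed first argument. The cases $n=0$ and $n=1$ are immediate: $\widetilde X^{(0)}\equiv X^{(0)}\equiv 1$, while from \eqref{Xtdef} $X^{(1)}(x_0,x)=\int_{x_0}^{x}\Phi(\xi)^{-1}d\xi$ and $\widetilde X^{(1)}(x_0,x)=\int_{x_0}^{x}\Phi(\xi)\,d\xi$ plainly change sign under $x_0\leftrightarrow x$.

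The engine of the induction is an auxiliary ``dual'' differentiation formula, which I would establish first by its own short induction. Differentiating the defining relations \eqref{Xtdef} with respect to the \emph{first} argument $x_0$, the boundary term is $-n\,X^{(n-1)}(x_0,x_0)\,\Phi(x_0)^{(-1)^n}$, and this vanishes for $n\ge 2$ since $X^{(n-1)}(x_0,x_0)=0$; invoking the formula at level $n-1$ and using $-(-1)^{n-1}=(-1)^n$ so that the weight left under the integral is exactly the one occurring in the definition of $\widetilde X^{(n-1)}$, one folds the integral back up and obtains, for every $n\ge 1$,
\[
\partial_{x_0}X^{(n)}(x_0,x)=-\frac{n}{\Phi(x_0)}\,\widetilde X^{(n-1)}(x_0,x),\qquad \partial_{x_0}\widetilde X^{(n)}(x_0,x)=-\,n\,\Phi(x_0)\,X^{(n-1)}(x_0,x),
\]
the second identity being obtained from the first by $\mathcal R_{\Phi}$. (All $\Phi$-powers are $C^1$ because $\Phi$ and $1/\Phi$ are continuous, so the differentiation under the integral is legitimate.)

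With this formula in hand, the inductive step for Theorem~\ref{symmX} reduces to a uniqueness statement for a first-order ODE. Fix $x_0$ and assume \eqref{conjRel1}--\eqref{conjRel2} at level $n-1$. If $n$ is even, I compare $g(x):=X^{(n)}(x,x_0)$ (with $x$ in the first slot) and $h(x):=\widetilde X^{(n)}(x_0,x)$: both vanish at $x=x_0$ since $n\ge 1$, while the dual formula gives $g'(x)=-\frac{n}{\Phi(x)}\widetilde X^{(n-1)}(x,x_0)$ and \eqref{Xtdef} gives directly $h'(x)=\frac{n}{\Phi(x)}\widetilde X^{(n-1)}(x_0,x)$; as $n-1$ is odd, \eqref{conjRel2} at level $n-1$ gives $\widetilde X^{(n-1)}(x,x_0)=-\widetilde X^{(n-1)}(x_0,x)$, hence $g'\equiv h'$ on $[a,b]$ and therefore $g\equiv h$, which is \eqref{conjRel1}. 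If $n$ is odd, I instead compare $X^{(n)}(x,x_0)$ with $-X^{(n)}(x_0,x)$ in the same way; both vanish at $x=x_0$, and matching the derivatives now requires $\widetilde X^{(n-1)}(x,x_0)=X^{(n-1)}(x_0,x)$, which is exactly the even case \eqref{conjRel1} at level $n-1$ (after interchanging the two points). This yields the second line of \eqref{conjRel2}; the first line follows by applying $\mathcal R_{\Phi}$ (equivalently, the whole argument being valid for every continuous nonzero function, by running it with $1/\Phi$ in place of $\Phi$).

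The only genuine difficulty is parity bookkeeping: at every step one must keep straight which of $\Phi,\,1/\Phi$ and which of $X^{(\cdot)},\,\widetilde X^{(\cdot)}$ appears, and check that the signs produced by reversing integration limits and by the induction hypothesis combine to $+$ when $n$ is even and $-$ when $n$ is odd. A more computational but self-contained alternative, avoiding the auxiliary formula, is to unfold $X^{(n)}(x,x_0)$ two levels via \eqref{Xtdef}, reverse the limits of the inner integral, apply Fubini to the resulting double integral, use the hypothesis at level $n-2$ to transpose the arguments of the innermost function, fold the inner integral back into $\widetilde X^{(n-1)}$ with interchanged endpoints, and finish with the hypothesis at level $n-1$; the crux there is again precisely this parity check.
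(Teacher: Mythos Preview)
Your proof is correct, and it takes a genuinely different route from the paper's. The paper identifies $X^{(n)}/n!$ and $\widetilde X^{(n)}/n!$ as Chen iterated integrals of the alternating sequences $\Phi^{\pm 1},\Phi^{\mp 1},\ldots$ and then invokes Chen's reversal identity
\[
\int_{x}^{x_0} f_1\,d\xi\cdots f_n\,d\xi=(-1)^n\int_{x_0}^{x} f_n\,d\xi\cdots f_1\,d\xi,
\]
observing that reversing an alternating sequence of even length swaps the two patterns (yielding \eqref{conjRel1}) while reversing one of odd length preserves the pattern (yielding \eqref{conjRel2}). Your argument is instead entirely self-contained: you first establish by a short induction the ``dual'' base-point derivative
\[
\partial_{x_0}X^{(n)}(x_0,x)=-\tfrac{n}{\Phi(x_0)}\widetilde X^{(n-1)}(x_0,x),
\]
and then run a second induction, comparing derivatives in $x$ of the two sides of each claimed identity and invoking uniqueness for a first-order ODE with matching initial value. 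The paper's approach is shorter because Chen's identity absorbs the induction, but it imports an external result; your approach is more elementary and, as a bonus, your auxiliary formula is exactly the differentiated form of the corollary the paper states \emph{after} the theorem (the alternative integral representation of $X^{(n)}$ with the base point as the variable of integration), so you obtain that corollary independently rather than as a consequence.
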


\begin{proof}
Let $f_1(x),\ldots , f_n(x)$ be continuous complex-valued functions in $[a,b]$ such that $f_i(x) \neq 0$ for all $x\in [a,b]$. We consider Chen's iterated path integrals introduced in \cite{chen1977} and defined by
\begin{equation*}
\label{ittDef}
\int_{x_0}^x f_1(\xi)d\xi \cdots f_n(\xi) d\xi := \int_{x_0}^x\left( \int_{x_0}^\xi f_1(\xi)d\xi \cdots f_{n-1}(\xi) d\xi \right) f_n(\xi) d\xi \qquad \text{for }n>1.
\end{equation*}
When $n=0$, set the integral to be $1$. The following property was obtained by Chen for iterated integrals \cite{chen1977} :
\begin{equation}
\int_{x}^{x_0} f_1(\xi)d\xi \cdots f_n(\xi) d\xi = (-1)^n \int_{x_0}^x f_n(\xi)d\xi \cdots f_1(\xi) d\xi.
\label{chen}
\end{equation}
Now let us consider the particular case \begin{equation}
\label{deffn}
f_n=\left(\frac{1}{\Phi }\right)^{(-1)^n} \qquad \text{for }n\geq 1. 
\end{equation}
Then equation \eqref{chen} becomes
$$
\frac{1}{n!}\Xt{n}{x,x_0}=
\left\{
\begin{array}{ll}
-\displaystyle\frac{1}{n!}\Xt{n}{x_0,x},& n\text{ odd}\\*[2ex]
\displaystyle\frac{1}{n!}\X{n}{x_0,x},& n\text{ even},
\end{array}
\right.
$$
which demonstrate property \eqref{conjRel1} and property \eqref{conjRel2} for the $\Phi $-conjugate $\widetilde X^{(n)}$. Property \eqref{conjRel2} for the $X^{(n)}$ is obtained by considering the case $f_n(x)=\Phi ^{(-1)^n}$ instead of functions \eqref{deffn}.
\end{proof}

\begin{remark}
Theorem \ref{symmX} is particularly interesting when SPPS method is used for numerical analysis. Indeed, when a SPPS numerical calculations is performed both set of functions $\Xt{n}{x_0,x}$, $\X{n}{x_0,x}$ are calculated for some $N>0$ and $n=1,\ldots, N$. For even numbers $n$, this theorem gives us the possibility to obtain all the functions $\Xt{n}{x_0,x}$ from the functions $\X{n}{x,x_0}$. Consequently, numerical calculations of the SPPS could be improved by about $25\,\%$ by using this result.
\end{remark}

\begin{remark}
A corollary is obtained from Theorem \ref{symmX}: for $n\geq 1$ the $\Phi$-conjugate power functions are related by the formulas
\begin{align*} 
\X{n}{x_0,x}=n\int_{x_0}^x \frac{\Xt{n-1}{\xi,x}}{\Phi(\xi)} d\xi, & \quad n \text{ even}\\
\Xt{n}{x_0,x}=n\int_{x_0}^x \Phi(\xi) \X{n-1}{\xi,x} d\xi, &\quad n \text{ odd}.
\end{align*}
Indeed, considering first the even $n$, from the definition \ref{defX} we have
$$
\wt X^{(n)}(x,x_0)=n\int_{x}^{x_0} \frac{\wt X^{(n-1)}(x,\xi)}{\Phi(\xi)}d\xi=n\int_{x_0}^{x} \frac{\wt X^{(n-1)}(\xi,x)}{\Phi(\xi)}d\xi,
$$
where in the last equality is obtained by interchanging the order of integration and using Theorem \ref{symmX}. However, since $n$ is even $\wt X^{(n)}(x,x_0)=X^{(n)}(x_0,x)$ again from Theorem \ref{symmX}. Hence, we obtain the the desired result when $n$ is even. The second formula when $n$ is odd is easily proven in a similar way. 
\end{remark}

\begin{theorem}[Binomial identity for even $n$]
\label{binomial}
Let $\Phi (x)$ be a given continuous nonzero complex-valued function on $[a,b]$ and $x_0\in [a,b]$. Then for a finite and an even $n\geq 2$ the $\Phi $-powers functions \eqref{Xtdef} satisfy the following identity:
\begin{equation*}
\label{binomX}
\sum_{k=0}^n (-1)^k \binom{n}{k}X^{(k)}\,\wt X^{(n-k)}=0
\end{equation*}
or equivalently
$$
\wt X^{(n)}=\sum_{k=1}^n (-1)^{k+1} \binom{n}{k}X^{(k)}\,\wt X^{(n-k)}.
$$
\end{theorem}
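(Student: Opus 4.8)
The plan is to prove the vanishing identity $\sum_{k=0}^n (-1)^k \binom{n}{k} X^{(k)} \wt X^{(n-k)} = 0$ for even $n$ by induction on $n$, using the integral recursions in Definition \ref{defX} together with the product (Leibniz) rule for the ordinary derivative $d/dx$. The key observation is that differentiating $\X{n}{x_0,x}$ with respect to $x$ removes one layer of iterated integration: from \eqref{Xtdef}, $\frac{d}{dx}\X{n}{x_0,x} = n\, \X{n-1}{x_0,x}\,\Phi(x)^{(-1)^n}$ and $\frac{d}{dx}\Xt{n}{x_0,x} = n\, \Xt{n-1}{x_0,x}\,\Phi(x)^{-(-1)^n}$. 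Notice that for $X^{(k)}$ the exponent of $\Phi$ is $(-1)^k$ while for $\wt X^{(n-k)}$ it is $-(-1)^{n-k} = -(-1)^{n}(-1)^{-k} = -(-1)^{k}$ when $n$ is even; so in each product $X^{(k)}\wt X^{(n-k)}$ the two differentiated factors carry reciprocal powers of $\Phi$, which is exactly the cancellation the identity exploits.

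Concretely, I would set $S_n(x) := \sum_{k=0}^n (-1)^k \binom{n}{k} \X{k}{x_0,x}\,\Xt{n-k}{x_0,x}$ and show $S_n \equiv 0$. First, $S_n(x_0) = 0$ is immediate: every term with $k\geq 1$ has $\X{k}{x_0,x_0}=0$ and every term with $k\leq n-1$ has $\Xt{n-k}{x_0,x_0}=0$, so for $n\geq 1$ all terms vanish at $x=x_0$. Hence it suffices to show $S_n'(x)\equiv 0$. Applying the Leibniz rule and the derivative formulas above, each term contributes
$(-1)^k\binom{n}{k}\big[k\,\X{k-1}{x_0,x}\Phi^{(-1)^k}\,\Xt{n-k}{x_0,x} + (n-k)\,\X{k}{x_0,x}\,\Xt{n-k-1}{x_0,x}\Phi^{-(-1)^k}\big]$,
using $(-1)^{n-k}=(-1)^k$ for even $n$. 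Reindexing the first sum by $k\mapsto k+1$ and comparing with the second sum, the $\Phi$-weighted terms should telescope/cancel pairwise via the Pascal-type identity $k\binom{n}{k} = (n-k+1)\binom{n}{k-1}$ and the sign flip $(-1)^{k}= -(-1)^{k-1}$; I expect the two sums to be exact negatives of each other, giving $S_n' \equiv 0$ directly — no induction actually needed, just bookkeeping. (If the direct cancellation is cleaner to organize inductively, one can instead integrate the relation $S_n' = (\text{something involving } S_{n-1})$, but I anticipate the one-shot computation works.)

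The main obstacle is purely bookkeeping: getting the index shifts and the sign/binomial matching exactly right so that the $\Phi^{(-1)^k}$ and $\Phi^{-(-1)^k}$ terms pair up correctly across the two sums, and making sure the boundary terms of the reindexing (the $k=0$ term of one sum and the $k=n$ term of the other) either vanish or cancel. Once $S_n \equiv 0$ is established, the "equivalently" reformulation $\wt X^{(n)} = \sum_{k=1}^n (-1)^{k+1}\binom{n}{k} X^{(k)}\wt X^{(n-k)}$ follows by isolating the $k=0$ term $X^{(0)}\wt X^{(n)} = \wt X^{(n)}$ and moving it to the other side, multiplying through by $-1$.
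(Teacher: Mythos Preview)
Your proposal is correct, and the ``one-shot'' version you anticipate does indeed work: after reindexing the first sum by $k\mapsto j+1$ and using $\binom{n}{j+1}(j+1)=n\binom{n-1}{j}=\binom{n}{j}(n-j)$ together with $\Phi^{(-1)^{j+1}}=\Phi^{-(-1)^j}$, the two halves of $S_n'$ are exact negatives of each other, so $S_n'\equiv 0$ and $S_n(x_0)=0$ give $S_n\equiv 0$. No induction is needed.

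This is a genuinely different route from the paper's proof. The paper starts from the integral definition $X^{(n)}=n\int_{x_0}^x X^{(n-1)}\Phi\,d\xi$ and applies integration by parts $n-1$ times, each step producing one additional term of the binomial sum until the final remainder integral is recognized as $\wt X^{(n)}$. Your argument instead differentiates the full binomial sum once and exploits the telescoping directly. The advantage of your approach is economy: it is a single computation rather than an $n$-step process, and the bookkeeping is localized to one reindexing. The paper's integration-by-parts argument has the virtue of being constructive (it \emph{derives} the binomial expression rather than verifying it), which is perhaps more natural if one did not already know the identity, and it makes visible why the odd-$n$ case degenerates to the trivial $\sum(-1)^k\binom{n}{k}X^{(k)}X^{(n-k)}=0$.
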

\begin{proof}
By integrating by parts for an even $n\geq 2$ we have
\begin{align*}
X^{(n)}(x_0,x)&=n\int_{x_0}^x X^{(n-1)}(x_0,\xi)\Phi (\xi) d\xi \\*[2ex]
&=n\Big[X^{(n-1)}\wt X^{(1)}-(n-1)\int_{x_0}^x \wt X^{(1)}X^{(n-2)}\frac{1}{\Phi } d\xi\Big],
\end{align*}
where for simplicity the arguments of the functions have been omitted in the second line (and for the rest of the proof). In particular, we observe that when $n=2$ we obtain $X^{(2)}=2X^{(1)}\wt X^{(1)}-\wt X^{(2)}$ which prove the identity in this case.

Now by integrating by parts again for $n\geq 3$, we find
$$
\int_{x_0}^x \wt X^{(1)}X^{(n-2)}\frac{1}{\Phi } d\xi=\frac{1}{2}\wt X^{(2)}X^{(n-2)}-\frac{1}{2}(n-2)\int_{x_0}^x \wt X^{(2)}X^{(n-3)}\Phi  d\xi
$$
such that
$$
X^{(n)}=n\wt X^{(1)}X^{(n-1)}-\frac{1}{2}n(n-1)\wt X^{(2)}X^{(n-2)}+\frac{1}{2}n(n-1)(n-2)\int_{x_0}^x \wt X^{(2)}X^{(n-3)}\Phi  d\xi.
$$
Pursuing these calculations, i.e. applying $n-1$ integration by parts for an even n, we finally obtain
$$
X^{(n)}=\sum_{k=1}^{n-1}(-1)^{k+1} \frac{n!}{k!(n-k)!} \wt X^{(k)}X^{(n-k)}-n\int_{x_0}^x \wt X^{(n-1)}X^{(0)}\frac{1}{\Phi } d\xi,
$$
where by definition the last term is just $\wt X^{(n)}$ and this complete the proof.
\end{proof}
\begin{remark}
Considering the same proof when $n$ is odd and finite, we obtain the trivial identity 
$$
\sum_{k=0}^n (-1)^k \binom{n}{k}X^{(k)}X^{(n-k)}=0.
$$
\end{remark}
\begin{remark}
From the usual binomial expansion, Theorem \ref{binomial} can be seen {\em from a purely symbolic point of view} as 
$$
(X-\wt X)^{(n)}=0 \quad \text{for even }n.
$$ 
\end{remark}

A corollary can be obtained from Theorem \ref{binomial} to obtain a generalization of the elliptic and the hyperbolic Pythagorean identities. For that, let us define the $\Phi $-generalized sine and cosine functions as

\begin{align*}
& C(x_0,x)=\sum_{j=0}^{\infty} \frac{(-1)^j }{(2j)!}\X{2j}{x_0,x}, &\wt{C}(x_0,x)=\sum_{k=0}^{\infty} \frac{(-1)^k }{(2k)!}\Xt{2k}{x_0,x},\\
&S(x_0,x)=\sum_{j=0}^{\infty} \frac{(-1)^j}{(2j+1)!} \X{2j+1}{x_0,x},
&\wt{S}(x_0,x)=\sum_{k=0}^{\infty} \frac{(-1)^k}{(2k+1)!} \Xt{2k+1}{x_0,x},
\end{align*}
and the $\Phi $-generalized hyperbolic functions  as
\begin{align*}
&Ch(x_0,x)=\sum_{j=0}^{\infty} \frac{\X{2j}{x_0,x}}{(2j)!}, &\wt{Ch}(x_0,x)=\sum_{k=0}^{\infty} \frac{\Xt{2k}{x_0,x}}{(2k)!},\\
&Sh(x_0,x)=\sum_{j=0}^{\infty} \frac{\X{2j+1}{x_0,x}}{(2j+1)!} ,
&\wt{Sh}(x_0,x)=\sum_{k=0}^{\infty} \frac{\Xt{2k+1}{x_0,x}}{(2k+1)!} .
\end{align*}
These series are convergent. Indeed, from definition \ref{defX} we find
$$
\frac{X^{(2j)}(x_0,x)}{(2j)!}=\int_{x_0}^x\Phi (\xi_1)\int_{x_0}^{\xi_1}\frac{1}{\Phi (\xi_2)}\int_{x_0}^{\xi_2}\Phi (\xi_3)\int_{x_0}^{\xi_3}\cdots \int_{x_0}^{\xi_{2j-1}}\frac{1}{\Phi (\xi_{2j})}d\xi_{2j}\cdots d\xi_1
$$
and
$$
\frac{X^{(2j+1)}(x_0,x)}{(2j+1)!}=\int_{x_0}^x\frac{1}{\Phi (\xi_1)}\int_{x_0}^{\xi_1}\Phi (\xi_2)\int_{x_0}^{\xi_2}\frac{1}{\Phi (\xi_3)}\int_{x_0}^{\xi_3}\cdots \int_{x_0}^{\xi_{2j}}\frac{1}{\Phi (\xi_{2j+1})}d\xi_{2j+1}\cdots d\xi_1
$$
such that on the interval $[a,b]$ we have
$$
\left|X^{(2j)}(x_0,x)\right|\leq \Big(\max |\Phi (x)|\Big)^j \Big(\max \left|\frac{1}{\Phi (x)}\right|\Big)^j\big|b-a\big|^{2j}=c^j
$$
and
$$
\left|X^{(2j+1)}(x_0,x)\right|\leq \Big(\max |\Phi (x)|\Big)^{j} \Big(\max \left|\frac{1}{\Phi (x)}\right|\Big)^{j+1}\big|b-a\big|^{2j+1}=|b-a| \Big(\max \left|\frac{1}{\Phi (x)}\right|\Big) c^j
$$
for $c=\Big(\max |\Phi (x)|\Big) \Big(\max \left|\frac{1}{\Phi (x)}\right|\Big)\big|b-a\big|^{2}$. Hence, the series $\sum_{j=0}^\infty \frac{c^j}{(2j)!}<\infty$ and $|b-a|\big(\max \left|\frac{1}{\Phi (x)}\right|\big)\sum_{j=0}^\infty \frac{c^j}{(2j+1)!}<\infty$ such that the $\Phi$-trigonometric functions and the $\Phi$-hyperbolic functions converge absolutely and uniformly from the Weierstrass M-test.

\begin{theorem}[Supersymmetric Pythagorean identities]
\label{trigo}
The $\Phi $-trigonometric functions satisfy the following generalized Pythagorean trigonometric identity
\begin{equation*}
C(x_0,x)\,\wt C(x_0,x)+S(x_0,x)\,\wt{S}(x_0,x)=1.
\end{equation*}
Moreover, the $\Phi$-hyperbolic functions satisfy the generalized hyperbolic identity
\begin{equation*}
Ch(x_0,x)\,\wt{Ch}(x_0,x)-Sh(x_0,x)\,\wt{Sh}(x_0,x)=1.
\end{equation*}
\end{theorem}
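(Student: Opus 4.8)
The plan is to reduce both identities to the binomial identity of Theorem~\ref{binomial} by forming the relevant products of series, grouping the resulting terms according to the total order $n$ of the monomial $X^{(\ell)}\wt X^{(n-\ell)}$, and observing that each such block of fixed order is — up to an overall sign and an $n!$ — exactly the alternating binomial sum appearing in Theorem~\ref{binomial}, hence zero unless $n=0$.

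First I would invoke the absolute and uniform convergence on $[a,b]$ of all four $\Phi$-trigonometric series (and, separately, of the four $\Phi$-hyperbolic series), which was just established via the Weierstrass $M$-test; this legitimizes the Cauchy product of $C$ with $\wt C$ and of $S$ with $\wt S$, together with the rearrangement that regroups the double sum by the value of $n=\ell+(n-\ell)$. Carrying this out, the terms with $\ell$ even come from $C\,\wt C$ and the terms with $\ell$ odd come from $S\,\wt S$, and only even $n$ can occur because the sum of two even (resp. two odd) indices is even. Thus
\begin{equation*}
C(x_0,x)\,\wt C(x_0,x)+S(x_0,x)\,\wt S(x_0,x)=\sum_{m=0}^{\infty}\ \Bigg(\sum_{\ell=0}^{2m}\varepsilon_{m,\ell}\,\frac{X^{(\ell)}(x_0,x)\,\wt X^{(2m-\ell)}(x_0,x)}{\ell!\,(2m-\ell)!}\Bigg),
\end{equation*}
and a short check of the sign $\varepsilon_{m,\ell}$ (it equals $(-1)^{m}$ for $\ell$ even, coming from $(-1)^{j}(-1)^{k}$ with $j+k=m$, and $-(-1)^{m}$ for $\ell$ odd, coming from $(-1)^{j}(-1)^{k}$ with $j+k=m-1$) shows that in both cases $\varepsilon_{m,\ell}=(-1)^{m}(-1)^{\ell}$, so the inner block is $\tfrac{(-1)^{m}}{(2m)!}\sum_{\ell=0}^{2m}(-1)^{\ell}\binom{2m}{\ell}X^{(\ell)}\wt X^{(2m-\ell)}$. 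By Theorem~\ref{binomial} this vanishes for every $m\geq 1$, so only the $m=0$ block survives, giving $X^{(0)}\wt X^{(0)}=1$, which is the first identity.

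For the $\Phi$-hyperbolic identity the computation is word for word the same, except that the factors $(-1)^{j}$ and $(-1)^{k}$ are now absent, so the block of order $2m$ becomes $\tfrac{1}{(2m)!}\sum_{\ell=0}^{2m}(-1)^{\ell}\binom{2m}{\ell}X^{(\ell)}\wt X^{(2m-\ell)}$ — here the sole minus sign, which again manufactures the factor $(-1)^{\ell}$, is supplied by the minus in $-Sh\,\wt{Sh}$ — and this is zero for $m\geq 1$ by the same theorem, leaving $1$. The only step needing genuine care is this reindexing-plus-sign bookkeeping: one must verify that regrouping the four absolutely convergent double series by total order reproduces precisely the alternating binomial sum with the correct sign on the $\ell$-odd part relative to the $\ell$-even part; once that is confirmed, the rest is immediate from Theorem~\ref{binomial}.
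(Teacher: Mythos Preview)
Your proposal is correct and follows essentially the same approach as the paper: form the Cauchy products, regroup by total degree $2m$, and recognize each block as the alternating binomial sum of Theorem~\ref{binomial}. Your presentation is slightly more streamlined in that you combine $C\,\wt C+S\,\wt S$ from the outset and identify the sign $\varepsilon_{m,\ell}=(-1)^m(-1)^\ell$ directly, whereas the paper first expands $C\,\wt C$ and $S\,\wt S$ separately and then merges them, but the substance is identical.
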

\begin{proof}
Let us consider first the Pythagorean trigonometric identity. From the infinite double sum property
\begin{equation}
\label{dinf}
\sum_{j=0}^\infty\sum_{k=0}^\infty c_{k,j}=\sum_{p=0}^\infty \sum_{q=0}^p c_{q,p-q},
\end{equation}
we obtain
\begin{align*}
C\wt{C}&=\sum_{j=0}^{\infty} \sum_{k=0}^{\infty} \frac{(-1)^{j+k} }{(2j)!(2k)!}X^{(2j)}\wt X^{(2k)}=\sum_{p=0}^\infty\sum_{q=0}^{p}\frac{(-1)^p}{(2q)!(2p-2q)!}X^{(2p-2q)}\wt X^{(2q)} \\*[2ex]
&=1+\sum_{p=0}^\infty\sum_{q=0}^{p+1}\frac{(-1)^{p+1}}{(2q)!(2p+2-2q)!}X^{(2p+2-2q)}\wt X^{(2q)} \\*[2ex]
&= 1+\sum_{p=0}^\infty \Big(\sum_{q=0}^{p}\frac{(-1)^{p+1}}{(2q)!(2p+2-2q)!}X^{(2p+2-2q)}\wt X^{(2q)}+ \frac{(-1)^{p+1} }{(2p+2)!}\wt X^{(2p+2)}\Big).
\end{align*}
Using again \eqref{dinf} for $S\wt{S}$ we find
\begin{align*}
S\wt{S}=\sum_{p=0}^\infty\sum_{q=0}^{p}\frac{(-1)^p}{(2q+1)!(2p+1-2q)!}X^{(2p+1-2q)}\wt X^{(2q+1)} 
\end{align*}
such that
\begin{align*}
C\wt{C}+S\wt{S}&= 1+\sum_{p=0}^\infty \Big(\sum_{q=0}^{p}\frac{(-1)^{p+1}}{(2q)!(2p+2-2q)!}X^{(2p+2-2q)}\wt X^{(2q)}+\\*[2ex]&\qquad +\frac{(-1)^p}{(2q+1)!(2p+1-2q)!}X^{(2p+1-2q)}\wt X^{(2q+1)} + \frac{(-1)^{p+1} }{(2p+2)!}\wt X^{(2p+2)}\Big)\\*[2ex]
&= 1+\sum_{p=0}^\infty  \frac{(-1)^{p+1}}{(2p+2)!}\Big(\sum_{q=0}^{p}\frac{(2p+2)!}{(2q)!(2p+2-2q)!}X^{(2p+2-2q)}\wt X^{(2q)}+\\*[2ex]&\qquad-\frac{(2p+2)!}{(2q+1)!\big((2p+2)-(2q+1)\big)!}X^{(2p+1-2q)}\wt X^{(2q+1)} + \wt X^{(2p+2)}\Big)\\*[2ex]
&=1+\sum_{p=0}^\infty  \frac{(-1)^{p+1}}{(2p+2)!}\sum_{q=0}^{2p+2}(-1)^{q}\binom{2p+2}{q} X^{(2p+2-q)}\wt X^{(q)}.
\end{align*}
However, since $2p+2$ is even and greater or equal than $2$ the last summation over $q$ is always zero from Theorem \ref{binomial} and this complete the proof for the first identity.

The hyperbolic identity can be proved in a similar way.
\end{proof}

The $\Phi$-trigonometric functions are illustrated in Figure \ref{f1a} for the real-valued function $\Phi(x)=(1+x)^2$. The generalized Pythagorean trigonometric identity of Theorem \ref{trigo} are then shown in Figure \ref{f1b} for the same function $\Phi(x)$. This case can be seen as a perturbation of the standard trigonometric functions, i.e. $\Phi\equiv 1$ around $x=0$. In the same way, we illustrate the $\Phi$-trigonometric functions in Figures \ref{f2a} and the generalized Pythagorean trigonometric identity in Figures \ref{f2b} for the real-valued function $\Phi(x)=\sqrt{\cosh x}$. In both case, we set $x_0=0$. We limit ourselves to these two examples, however several other functions $\Phi(x)$ have been considered.

\begin{figure}[H]
\centering
\caption{\label{f1a}Graphics of the $\Phi$-trigonometric functions for $\Phi=(1+x)^2$ and $x_0=0$}
\includegraphics[width=1\textwidth]{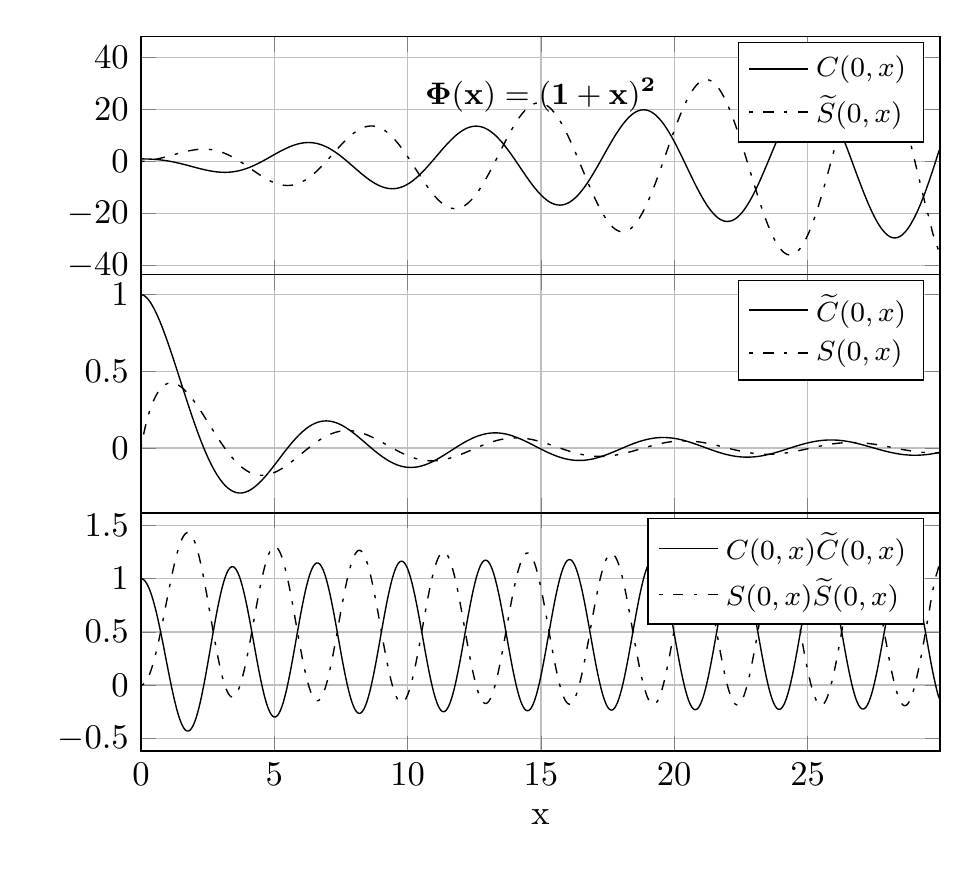}
 \end{figure}     
 
 \begin{figure}[H]
\centering
\caption{\label{f1b}Generalized Pythagorean trigonometric identity for $\Phi=(1+x)^2$ and $x_0=0$ in the phase space of the $\Phi$-trigonometric functions}
\includegraphics[width=1\textwidth]{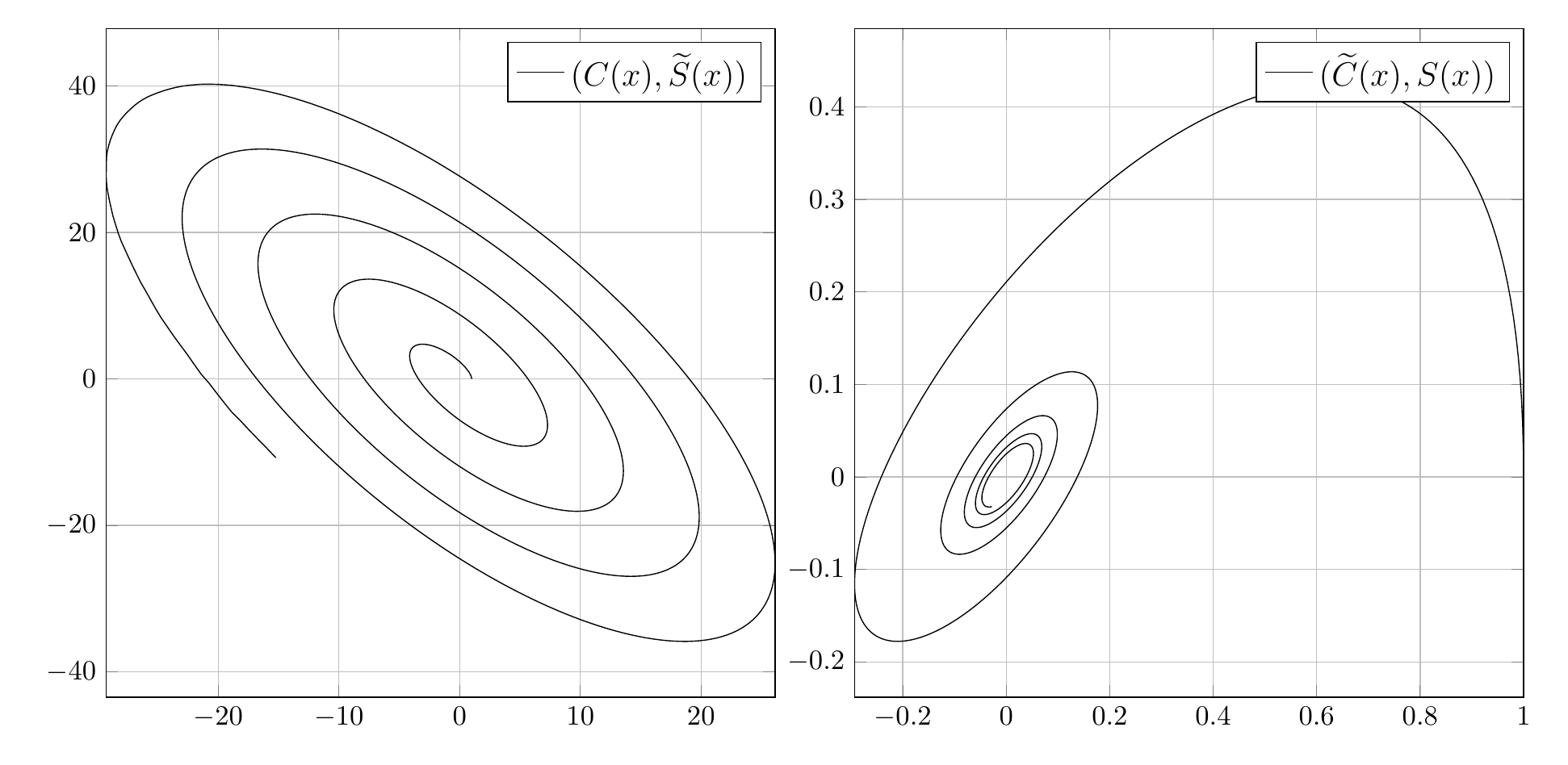}
 \end{figure}     
 
\begin{figure}[H]
\centering
\caption{\label{f2a}Graphics of the $\Phi$-trigonometric functions for $\Phi=\sqrt{\cosh x}$ and $x_0=0$}
\includegraphics[width=1\textwidth]{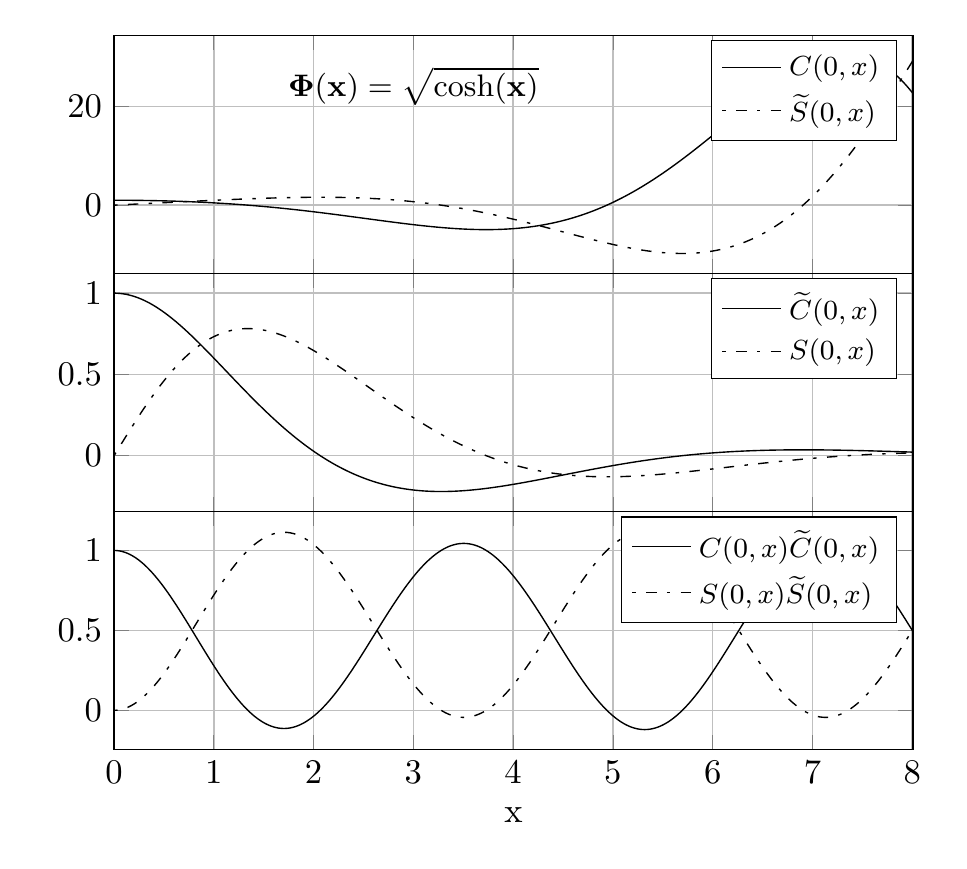}
 \end{figure}     
 
 \begin{figure}[H]
\centering
\caption{\label{f2b}Generalized Pythagorean trigonometric identity for $\Phi=\sqrt{\cosh x}$ and $x_0=0$ in the phase space of the $\Phi$-trigonometric functions}
\includegraphics[width=1\textwidth]{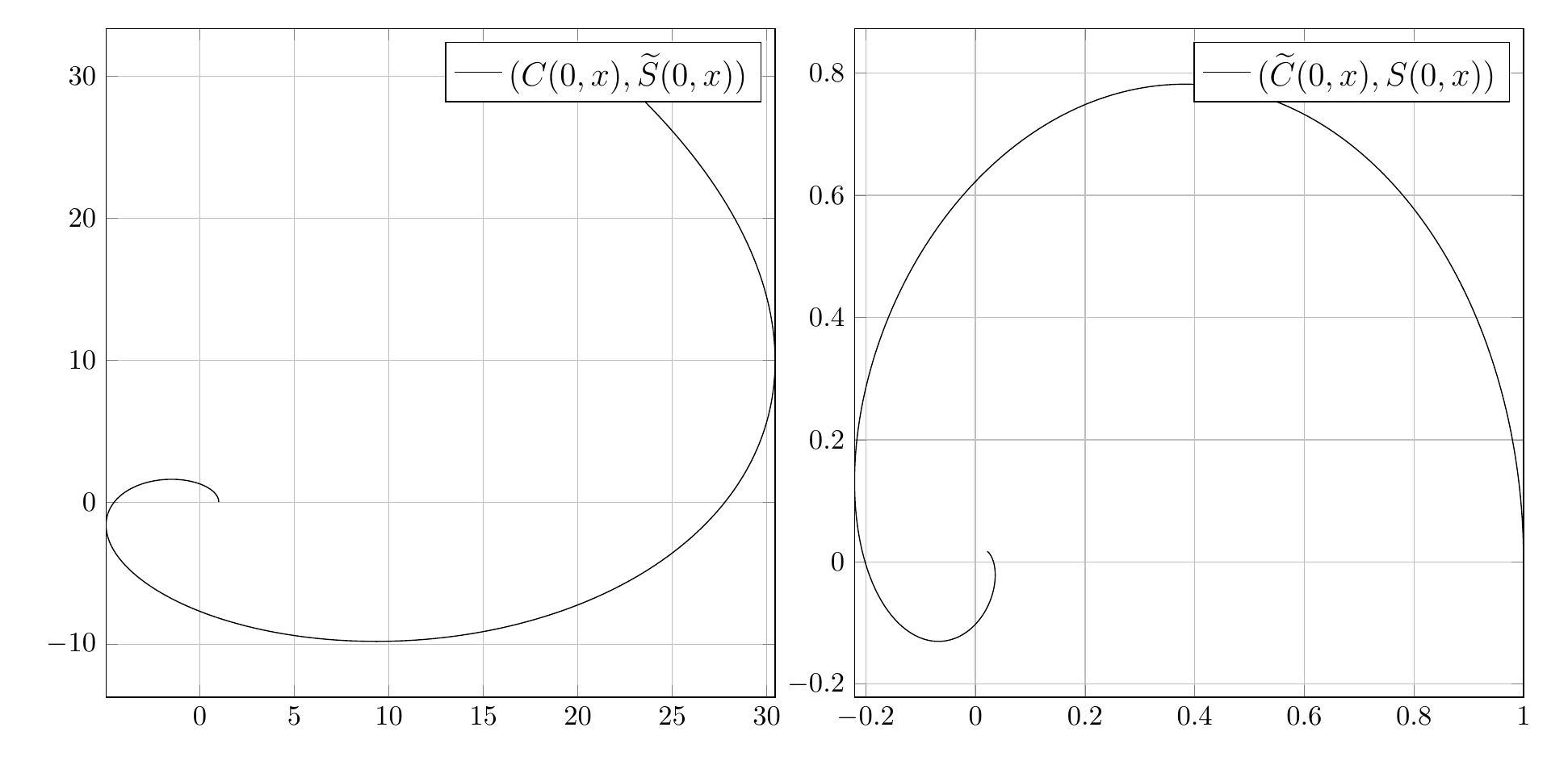}
 \end{figure}

In what follows the set of $\Phi$-power functions $\{\Xn{2k}\}_{k=0}^{\infty} \cup \{\Xtn{2k+1}\}_{k=0}^{\infty}$ (as well as the $\Phi$-conjugate set $\{\Xtn{2k}\}_{k=0}^{\infty} \cup \{\Xn{2k+1}\}_{k=0}^{\infty}$) will be of particular interest. For simplicity, we define the functions $\mathcal Y_n(x_0,x)$, $\wt {\mathcal Y}_n(x_0,x)$ as 

\begin{equation}
\label{defY}
\mathcal Y_{n}(x_0,x):=\left\{
                \begin{array}{ll}
                  \Xt{n}{x_0,x}, &  n\text{ odd}, \\*[2ex]
                   \X{n}{x_0,x},& n\text{ even},
                \end{array}
              \right. 
\end{equation}
and
\begin{equation}
\label{defYt}
\wt{\mathcal Y}_{n}(x_0,x):=\left\{
                \begin{array}{ll}
                  \X{n}{x_0,x}, &  n\text{ odd}, \\*[2ex]
                   \Xt{n}{x_0,x},& n\text{ even},
                \end{array}
              \right.
\end{equation}
for $n\geq 0$.

It was shown that generalized powers posses the property of completeness in $L^2(a,b)$ \cite{kravchenko2011completeness, kravtrembmorelos}. Indeed, for $x_0$ an arbitrary fixed point in $[a,b]$ it has been shown that when the point $x_0$ coincides with one of the extreme of the interval, then the infinite system of functions $\{\Xn{2k}\}_{k=0}^{\infty}$ is complete in $L^2(a,b)$. Under the same condition, the completeness of $\{\Xtn{2k+1}\}_{k=0}^{\infty}$ is guaranteed in $L^2(a,b)$. Meanwile, when $x_0$ is an interior point of the interval the system of functions $\{\mathcal Y_{k}\}_{k=0}^\infty$ is complete in $L^2(a,b)$. For instance, in the case $\Phi \equiv 1$ and choosing $x_0=0$ on the interval $[0,1]$, we find
\begin{equation}
\label{xpairs}
X^{(0)}(x)=1,\quad X^{(2)}(x)=x^2,\quad X^{(4)}(x)=x^4,\ldots
\end{equation}
and
\begin{equation}
\label{ximpairs}
\Xt{1}{x}=x,\quad \Xt{3}{x} =x^3,\quad \Xt{5}{x} =x^5,\ldots
\end{equation}
Both systems of monomials are complete in $L^2(0,1)$ due to the M\"untz theorem \cite[p.275]{davis}. Hence, the system of fonctions $\{\Xn{2n}\}_{n=0}^{\infty}$ and $\{\Xtn{2n+1}\}_{n=0}^{\infty}$ represent a direct generalization of the system of monomials \eqref{xpairs}, \eqref{ximpairs} if instead of $\Phi \equiv 1$ an arbitrary sufficiently smooth and nonvanishing function $\Phi $ is choosen.

Let us now introduce a $\Phi$-derivative which will be used in what follows:

\begin{equation*}
Dh(x)=\Phi (x)\frac{d}{dx}h(x) \qquad \text{and}\qquad \wt Dh(x)=\frac{1}{\Phi (x)}\frac{d}{dx}h(x),
\end{equation*}
where $h(x)$ is an arbitrary complex-valued function well defined on the interval of interest. For $D^{(0)}h(x)\equiv h(x)$ and $\wt D^{(0)}h(x)\equiv h(x)$
 and the higher-order $\Phi $-derivatives are defined alternately as
\begin{equation*}
D^{(k)}h(x) =D\left(\wt D^{(k-1)}h(x) \right)\qquad \text{and} \qquad \wt{D}^{(k)}h(x) =\wt{D}\left( D^{(k-1)}h(x) \right).
\end{equation*}

These definitions of generalized derivatives agree with many aspect with the standard derivative applied on  functions of the type $(x-x_0)^n$. For instance, we have that $D^{(k)}\X{k}{x_0,x}=\wt{D}^{(k)}\Xt{k}{x_0,x}=k!$. More generally, we have the following proposition.

\begin{proposition}
\label{DX}
When $n,k\in \mathbb Z_{\geq 0}$ are of the same parity and $n\geq k$, we have
\begin{align}
D^{(k)} \Xn{n}(x_0,x)&= \displaystyle \frac{n!}{(n-k)!}\Xn{n-k}(x_0,x),\label{DkXn} \\*[2ex]
\wt D^{(k)} \Xtn{n}(x_0,x)&=  \displaystyle \frac{n!}{(n-k)!}\Xtn{n-k}(x_0,x). \label{DtkXtn}
\end{align}
On the other, when $n,k\in \mathbb Z_{\geq 0}$ are not of the same parity and $n> k$, we obtain
\begin{align}
\wt D^{(k)} \Xn{n}(x_0,x)&=  \displaystyle \frac{n!}{(n-k)!}\Xn{n-k}(x_0,x),\label{DtkXn}\\[2ex] D^{(k)} \Xtn{n}(x_0,x)&=  \displaystyle \frac{n!}{(n-k)!}\Xtn{n-k}(x_0,x). \label{DkXtn}
\end{align}
\end{proposition}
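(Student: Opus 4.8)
The plan is to prove the four identities by induction on $k$, using the definition \eqref{Xtdef} of the $\Phi$-generalized powers together with the fundamental theorem of calculus, and organizing the cases according to the parity of $n$ and of $n-k$. The base case $k=0$ is immediate since $D^{(0)}$ and $\wt D^{(0)}$ are the identity and $\frac{n!}{n!}=1$. For the inductive step, the key observation is how a single $\Phi$-derivative acts on $\Xn{n}$. From \eqref{Xtdef} we have, for $n\geq 1$,
$$
\frac{d}{dx}\Xn{n}(x_0,x) = n\,\Xn{n-1}(x_0,x)\,\big(\Phi(x)\big)^{(-1)^n},
$$
so that
$$
\big(\Phi(x)\big)^{-(-1)^n}\frac{d}{dx}\Xn{n}(x_0,x) = n\,\Xn{n-1}(x_0,x).
$$
Now observe that $-(-1)^n = (-1)^{n-1}$, so the operator that differentiates $\Xn{n}$ cleanly down to $n\,\Xn{n-1}$ is multiplication by $\big(\Phi\big)^{(-1)^{n-1}}$ followed by $\tfrac{d}{dx}$; this is exactly $D$ when $n-1$ is even (i.e. $n$ odd) and exactly $\wt D$ when $n-1$ is odd (i.e. $n$ even). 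Writing this as a single statement: applying $D$ when $n$ is odd, or $\wt D$ when $n$ is even, sends $\Xn{n}\mapsto n\,\Xn{n-1}$. Symmetrically, applying $\wt D$ when $n$ is odd, or $D$ when $n$ is even, sends $\Xtn{n}\mapsto n\,\Xtn{n-1}$.

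With this one-step rule in hand, the induction is bookkeeping about which of $D$, $\wt D$ is applied at each stage. Recall that the higher $\Phi$-derivatives alternate: $D^{(k)} = D\,\wt D\,D\,\wt D\cdots$ ($k$ factors, starting with $D$), and likewise $\wt D^{(k)}$ starts with $\wt D$. For \eqref{DkXn}, assume $n,k$ have the same parity and $n\geq k\geq 1$. Write $D^{(k)}\Xn{n} = D^{(k-1)}\big(D\,\Xn{n}\big)$ if $k$ is such that the outermost-applied operator (the rightmost in the composition, which hits $\Xn{n}$ first) is $D$; one checks from the alternating definition that the operator hitting $\Xn{n}$ first inside $D^{(k)}$ is $D$ when $k$ is odd and $\wt D$ when $k$ is even. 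Since $n$ and $k$ have the same parity, "$k$ odd" matches "$n$ odd," which is exactly the case where $D\,\Xn{n} = n\,\Xn{n-1}$; and "$k$ even" matches "$n$ even," where $\wt D\,\Xn{n} = n\,\Xn{n-1}$. Either way the first-applied operator produces $n\,\Xn{n-1}$, and what remains is $D^{(k-1)}$ acting on $\Xn{n-1}$, where now $n-1$ and $k-1$ again share a parity and $n-1\geq k-1$; note also that the remaining composition $D^{(k-1)}$ has the correct alternation structure to be applied to $\Xn{n-1}$. By the induction hypothesis applied at $(n-1,k-1)$ this equals $n\cdot\frac{(n-1)!}{(n-k)!}\Xn{n-1-(k-1)} = \frac{n!}{(n-k)!}\Xn{n-k}$, as desired. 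Identity \eqref{DtkXtn} follows by the identical argument with the roles of $D$ and $\wt D$, and of $\Xn{\cdot}$ and $\Xtn{\cdot}$, interchanged; alternatively one may invoke the $\mathcal R_\Phi$ symmetry, since $\mathcal R_\Phi$ swaps $\Phi\leftrightarrow 1/\Phi$, hence $D\leftrightarrow\wt D$ and $\Xn{\cdot}\leftrightarrow\Xtn{\cdot}$, carrying \eqref{DkXn} to \eqref{DtkXtn}.

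For the opposite-parity identities \eqref{DtkXn} and \eqref{DkXtn}, the argument is the same but the parity matching is reversed. Assume $n,k$ have opposite parity and $n>k\geq 1$. In $\wt D^{(k)}$ the first-applied operator is $\wt D$ when $k$ is odd and $D$ when $k$ is even. Since $n$ and $k$ now have opposite parity, "$k$ odd" means "$n$ even," where $\wt D\,\Xn{n} = n\,\Xn{n-1}$; and "$k$ even" means "$n$ odd," where $D\,\Xn{n} = n\,\Xn{n-1}$. So again the first-applied operator yields $n\,\Xn{n-1}$, leaving $\wt D^{(k-1)}$ — with the correct alternation to continue — acting on $\Xn{n-1}$, where $n-1$ and $k-1$ have opposite parity and $n-1\geq k$; if $n-1 = k-1$ would force equal parity, so in fact $n-1>k-1$, keeping us in the stated regime, and when $k-1=0$ we land in the base case. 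Induction then gives $n\cdot\frac{(n-1)!}{(n-k)!}\Xn{n-k} = \frac{n!}{(n-k)!}\Xn{n-k}$. Identity \eqref{DkXtn} follows by the same symmetry considerations as before.

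The only real subtlety — and the step I would be most careful about — is the bookkeeping of which elementary operator $D$ or $\wt D$ is the one that acts \emph{first} (i.e. sits innermost) in $D^{(k)}$ and $\wt D^{(k)}$, and verifying that after peeling it off the remaining composition is exactly $D^{(k-1)}$ or $\wt D^{(k-1)}$ with its standard alternation, so that the induction hypothesis applies verbatim to $\Xn{n-1}$. Everything else is a direct consequence of the one-step differentiation rule above and the fundamental theorem of calculus; no convergence or analytic issues arise since all the $\Xn{n}$ are, by construction, $C^1$ (indeed smooth when $\Phi$ is), and the $\Phi$-derivatives are defined pointwise.
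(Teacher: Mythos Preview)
Your proof is correct and follows essentially the same strategy as the paper's: induction, peeling off the innermost elementary operator of $D^{(k)}$ (resp.\ $\wt D^{(k)}$), applying the one-step rule $\tfrac{d}{dx}\Xn{n}=n\,\Xn{n-1}\,\Phi^{(-1)^n}$, and then invoking the hypothesis at $(n-1,k-1)$. The paper organizes the induction on $n$ rather than on $k$, and writes the peeling as the identity $D\circ\wt D^{(k-1)}=D^{(k-1)}\circ D$ for $k$ odd (exactly the ``bookkeeping'' step you flagged), but the substance is identical.
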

\begin{proof}
We begin by considering induction on $n$ for equation \eqref{DkXn}. The cases $n=0,1$ are trivially verified. Let us suppose that $n$ is even and equation \eqref{DkXn} is valid for $0\leq k\leq n$ and $k$ even. Then for $1\leq k\leq n+1$ and $k$ odd we obtain

\begin{align}
\label{proofDX}
D^{(k)}X^{(n+1)}&=D\big(\wt D^{(k-1)} X^{(n+1)}\big)=D^{(k-1)}\big(D X^{(n+1)}\big)\\*[2ex]&=D^{(k-1)}\left((n+1)\Phi X^{(n)}\frac{1}{\Phi }\right)=(n+1)\frac{n!}{(n+1-k)!}X^{(n-k+1)}, \nonumber
\end{align}
where the second equality in \eqref{proofDX} is valid for $k$ odd. This prove the first equation \eqref{DkXn} when $n$ is even. The case $n$ odd (and $k$ odd) of equation \eqref{DkXn} is prove in similar way as well as its $\Phi$-conjugate counterpart \eqref{DtkXtn}.

Equations \eqref{DtkXn} and \eqref{DkXtn} are proven in a similar way.
\end{proof}

\begin{remark}
Despite these results, other combinations of $\Phi $-derivatives applied on the $\Phi $-power functions do not give such simple results. For instance, we have $D^{(2)}X^{(3)}=6X^{(1)}-6\left(\frac{1}{\Phi }\right)\left(\frac{\Phi '}{\Phi }\right)X^{(2)}$.
\end{remark}

\begin{remark}
Considering the $\Phi$-derivatives of the generalized trigonometric and the hyperbolic functions, we easily show that
$DS(x_0,x) = C(x_0,x)$, $\wt DC(x_0,x)= -S(x_0,x)$, $D[Sh(x_0,x)] = Ch(x_0,x)$ and $\wt D[Ch(x_0,x)]= Sh(x_0,x)$.
\end{remark}

\section{Supersymmetric Taylor series from the $\Phi $-power functions}

In this section, and only in this section, we suppose that the function $\Phi(x)$ is a real-valued function on $[a,b]$. We will treat the approximation of functions by generalized Taylor series using $\Phi $-generalized powers. Taylor series have already been considered from the iterated integrals $X^{(n)},\wt X^{(n)}$ for the Sturm-Liouville equation, see \cite{kravtrembmorelos}; here we consider a more general setting and a different approach.

In what follows we shall have frequent occasions to use Wronskian so that it will be convenient to introduce some notations. The functions $f_1(x),\ldots, f_n(x)$ being of class $C^n$, we set
$$
\mathcal W[f_1(x),f_2(x),\ldots,f_n(x)]\equiv \left|
\begin{array}{cccc}
f_1(x) & f_2(x) & \cdots & f_n(x)\\
f_1'(x) & f_2'(x) & \cdots & f_n'(x)\\
\vdots & \vdots & \ddots & \vdots \\
f_1^{(n)}(x) & f_2^{(n)}(x) & \cdots & f_n^{(n)}(x)
\end{array}
\right|.
$$
Moreover, since in what follows we will pay special attention to the sets \\ $\{\mathcal Y_0(x_0,x),\mathcal Y_1(x_0,x),\ldots, \mathcal Y_n(x_0,x)\}$ and $\{\wt{\mathcal Y}_0(x_0,x),\wt{\mathcal Y}_1(x_0,x),\ldots, \wt{\mathcal Y}_n(x_0,x)\}$, we set as a particuliar case
$$
\mathcal W_n(x)\equiv \mathcal W\big[\mathcal Y_0(x_0,x),\mathcal Y_1(x_0,x),\ldots, \mathcal Y_n(x_0,x)\big]
$$
and
$$
\wt{\mathcal W}_n(x)\equiv \mathcal W\big[\wt{\mathcal Y}_0(x_0,x),\wt{\mathcal Y}_1(x_0,x),\ldots, \wt{\mathcal Y}_n(x_0,x)\big].
$$

\begin{proposition} 
\label{linind}
Let $x_0$ be a given point in the interval $(a,b)$ of $\Phi $ and the two sets of fonctions $\mathcal S_n=\{\mathcal Y_0(x_0,x),\ldots, \mathcal Y_n(x_0,x)\}$, $\wt{\mathcal{S}}_n=\{\wt 
{\mathcal Y}_0(x_0,x),\ldots, \wt{\mathcal Y}_n(x_0,x)\}$ are of class $C^n(a,b)$. Then the sets of functions $\mathcal S_n$ and $\wt{\mathcal S}_n$ are linearly independant.


\end{proposition}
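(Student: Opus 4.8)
The plan is to prove the two statements at once by induction on $n$, exploiting that the first-order $\Phi$-derivatives $D$ and $\wt D$ act as lowering operators that interchange the two systems $\mathcal S$ and $\wt{\mathcal S}$. The base case $n=0$ is immediate, since $\mathcal S_0=\wt{\mathcal S}_0=\{1\}$ consists of a single nonzero function.

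For the inductive step I would first record two shift relations. Differentiating the defining integral in Definition~\ref{defX} gives $\tfrac{d}{dx}X^{(j)}(x_0,x)=jX^{(j-1)}(x_0,x)\big(\Phi(x)\big)^{(-1)^j}$, and likewise for $\wt X^{(j)}$; multiplying by the appropriate power of $\Phi$ and keeping track of the parity of $j$ and of $j-1$ yields, for every $j\geq1$,
$$
\wt D\,\mathcal Y_j(x_0,x)=j\,\wt{\mathcal Y}_{j-1}(x_0,x),\qquad D\,\wt{\mathcal Y}_j(x_0,x)=j\,\mathcal Y_{j-1}(x_0,x),
$$
together with $\wt D\,\mathcal Y_0=D\,\wt{\mathcal Y}_0=0$; this is exactly the case $k=1$ of Proposition~\ref{DX}. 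I would also use that $\mathcal Y_j(x_0,x_0)=\wt{\mathcal Y}_j(x_0,x_0)=0$ for $j\geq1$, each being an iterated integral based at $x_0$, whereas $\mathcal Y_0(x_0,x_0)=\wt{\mathcal Y}_0(x_0,x_0)=1$.

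Now assume the statement for $n-1$ and suppose $\sum_{j=0}^{n}c_j\,\mathcal Y_j(x_0,x)\equiv0$ on $(a,b)$. Evaluating at $x=x_0$ kills all terms with $j\geq1$ and forces $c_0=0$. Since the remaining functions are at least of class $C^1$ (both systems being $C^n$ with $n\geq1$), I may apply $\wt D$ to $\sum_{j=1}^{n}c_j\,\mathcal Y_j(x_0,x)\equiv0$, and the first shift relation turns this into $\sum_{i=0}^{n-1}(i+1)c_{i+1}\,\wt{\mathcal Y}_i(x_0,x)\equiv0$, a linear relation among the elements of $\wt{\mathcal S}_{n-1}$. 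The induction hypothesis for $\wt{\mathcal S}_{n-1}$ gives $(i+1)c_{i+1}=0$ for all $i$, hence $c_1=\dots=c_n=0$, so $\mathcal S_n$ is linearly independent. Running the same argument with $D$, $\wt D$ and $\mathcal S$, $\wt{\mathcal S}$ interchanged, and invoking the induction hypothesis for $\mathcal S_{n-1}$, shows $\wt{\mathcal S}_n$ is linearly independent, which closes the simultaneous induction.

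There is no serious obstacle; the only point requiring care is the parity analysis needed to derive the shift relations from Definition~\ref{defX}, and the one structural fact — the reason the induction must be carried on both systems simultaneously — is that $\wt D$ maps the $\mathcal Y$-system onto the $\wt{\mathcal Y}$-system rather than back to itself. As a by-product, iterating the shift relations shows that the alternating $\Phi$-derivative obtained by applying $\wt D$, then $D$, then $\wt D$, $\dots$, $k$ times and evaluating at $x_0$ acts on $\mathcal Y_j$ as $j!\,\delta_{jk}$ for $0\le j,k\le n$; since such an operator is a combination of ordinary derivatives of orders $1,\dots,k$ with nonvanishing top coefficient, it follows that the Wronskian matrices of $\mathcal S_n$ and $\wt{\mathcal S}_n$ at $x_0$ are triangular with nonzero diagonal, so in particular $\mathcal W_n(x_0)\neq0$ and $\wt{\mathcal W}_n(x_0)\neq0$.
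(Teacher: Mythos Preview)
Your proof is correct but takes a genuinely different route from the paper. The paper works directly with the ordinary Wronskian matrix at $x_0$: it argues the matrix is lower triangular by observing that $\mathcal Y_k^{(j)}(x_0,x)$ for $j<k$ expands as a combination of $\mathcal Y_r,\wt{\mathcal Y}_r$ with $r\geq1$, all of which vanish at $x_0$; it then computes the diagonal entries $\mathcal Y_m^{(m)}(x_0,x_0)$ by a separate induction, obtaining the explicit values $m!$ or $m!\,\Phi(x_0)$ according to the parity of $m$, and concludes $\mathcal W_n(x_0)\neq0$. You instead bypass the Wronskian in the main argument, using the shift relations $\wt D\,\mathcal Y_j=j\,\wt{\mathcal Y}_{j-1}$ and $D\,\wt{\mathcal Y}_j=j\,\mathcal Y_{j-1}$ to reduce a linear dependency in $\mathcal S_n$ (after killing $c_0$ by evaluation at $x_0$) to one in $\wt{\mathcal S}_{n-1}$, and close with a simultaneous induction on the two systems. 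Your approach is cleaner and more conceptual, and your closing remark that the alternating $\Phi$-derivatives act on $\mathcal Y_j$ at $x_0$ as $j!\,\delta_{jk}$ recovers the triangularity of the Wronskian as a by-product. The advantage of the paper's approach is that it yields the explicit value of $\mathcal W_n(x_0)$, which is then fed into Abel's identity to obtain $\mathcal W_n(x)$ for all $x$ (equations~\eqref{Wn} and~\eqref{Wtn}); these formulas are used in the proof of Theorem~\ref{Taylor}, so if your argument were adopted here one would still need to supply that computation separately.
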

\begin{proof}
Let us first consider the Wronskian matrix at the point $x=x_0$, i.e. calculate  the matrix $\mathcal Y_j^{(i)}(x_0,x_0)$ for $0\leq i,j\leq n$. We have that $\mathcal Y_j(x_0,x_0)=\delta_{i,0}$, where $\delta_{i,k}$ is the Kronecker delta. The first line and column of the Wronskian matrix being known, we now consider the strictly upper diagonal terms $\mathcal Y_k^{(j)}(x_0,x)$, where $1\leq j<k\leq n$. Considering the derivatives, these terms can be expanded as a summation of the functions $\mathcal Y_r(x_0,x)$, $\wt{\mathcal Y}_r(x_0,x)$ for $1\leq r\leq k-1$. Hence, evaluated at $x=x_0$ we find $\mathcal Y_k^{(j)}(x_0,x)=0$ and the Wronskian matrix is lower triangular. By similar calculations we find that Wronskian matrix $\wt{\mathcal Y}_j^{(i)}(x_0,x_0)$ of the functions $\wt{\mathcal S}_n$ at $x=x_0$ is lower triangular.

On the diagonal of the Wronskian matrices we obtain
\begin{equation}
\label{diagw1}
\mathcal Y_m^{(m)}(x_0,x_0)= \left\{
                \begin{array}{ll}
                  \displaystyle  m!\,\Phi (x_0), &  m\text{ odd}, \\*[2ex]
                   \displaystyle m!, & m\text{ even},
                \end{array}
              \right. 
\end{equation}
and
\begin{equation}
\label{diagw2}
\wt{\mathcal Y}_m^{(m)}(x_0,x_0) = \left\{
                \begin{array}{ll}
                  \displaystyle  m!\,\Phi ^{-1}(x_0), &  m\text{ odd}, \\*[2ex]
                   \displaystyle m!, & m\text{ even},
                \end{array}
              \right. 
\end{equation}
for $0\leq m\leq n$. This can be shown by induction on $n$. Indeed, equations \eqref{diagw1}, \eqref{diagw2} are satisfed for $n=0,1$. For an even $n>0$ suppose that equations \eqref{diagw1}, \eqref{diagw2} are valid for $0\leq m\leq n-1$. We have
$$
\mathcal Y_n^{(n)}=\frac{d^{n-1}}{dx^{n-1}}\mathcal Y_n'=\frac{d^{n-1}}{dx^{n-1}}\Big(n\Phi \wt{\mathcal Y}_{n-1}\Big)=n\sum_{k=0}^{n-1}\binom{n-1}{k} \Phi ^{(n-1-k)}\wt{\mathcal Y}_{n-1}^{(k)}.
$$
As has been shown below, the terms $\wt{\mathcal Y}_{n-1}^{(k)}$ evaluated at $x=x_0$ are zero for $k=0,\ldots, n-2$ such that
$$
\mathcal Y_n^{(n)}(x_0,x_0)=n\Phi (x_0)\wt{\mathcal Y}_{n-1}^{(n-1)}(x_0,x_0)=n\Phi (x_0)(n-1)!\Phi ^{-1}(x_0)=n!.
$$
The case $n$ odd and greater than $1$ can be shown in a similar way to complete the proof of \eqref{diagw1}. Equation \eqref{diagw2} is also proved by induction in a similar way.

Hence, the Wronskians at $x=x_0$ is given by
$$
\mathcal W_n(x_0)=\prod_{k=0}^n \mathcal Y_k^{(k)}(x_0,x_0)= \left\{\begin{array}{ll}
\alpha_n\Big(\sqrt{\Phi (x_0)}\Big)^{n+1}, & n \text{ odd},\\*[2ex]
\alpha_n \Big(\sqrt{\Phi (x_0)}\Big)^{n}, & n \text{ even},
\end{array}
\right.
$$
and
$$
\wt{\mathcal W}_n(x_0)=\left\{\begin{array}{ll}
\alpha_n\Big(\displaystyle\frac{1}{\sqrt{\Phi (x_0)}}\Big)^{n+1}, & n \text{ odd},\\*[3ex]
\alpha_n \Big(\displaystyle\frac{1}{\sqrt{\Phi (x_0)}}\Big)^{n}, & n \text{ even},
\end{array}
\right.
$$
where
$$
\alpha_n:=\prod_{k=0}^n k!.
$$

Now since the set of $n+1$ functions of $\mathcal S_n$ ($\wt{\mathcal S}_n$) are $n$ times differentiable over the interval $(a,b)$ with $W_n(x_0)\neq 0$ ($\wt{W}_n(x_0)\neq 0$) for some $x_0\in (a,b)$, the functions of $\mathcal S_n$ ($\wt{\mathcal S}_n$) are linearly independent from the Abel's identity. 
\end{proof}

\begin{proposition}
\label{fundset}
Let $x_0$ be a given point in the interval $[a,b]$ of $\Phi $. Then the fundamental set of solutions for the ordinary differential equation $D^{(n+1)}y(x)=0$ of order $n+1$ is
\begin{equation}
\label{fundset1}
\left\{
\begin{array}{ll}
\mathcal S_{n}=\{\mathcal Y_0(x_0,x),\ldots, \mathcal Y_{n}(x_0,x)\}, & n\text{ odd}, \\*[2ex]
\wt{\mathcal S}_{n} =\{\wt {\mathcal Y}_0(x_0,x),\ldots, \wt{\mathcal Y}_{n}(x_0,x)\}, & n\text{ even}.
\end{array}
\right.
\end{equation}
Moreover, the fundamental set of solutions for the ordinary differential equation $\wt D^{(n+1)}y(x)=0$ of order $n+1$ is
\begin{equation}
\label{fundset2}
\left\{
\begin{array}{ll}
\wt{\mathcal S}_{n}=\{\wt{\mathcal Y}_0(x_0,x),\ldots, \wt{\mathcal Y}_{n}(x_0,x)\}, & n\text{ odd},\\*[2ex]
\mathcal S_{n}=\{\mathcal Y_0(x_0,x),\ldots, \mathcal Y_{n}(x_0,x)\}, & n\text{ even}.
\end{array}
\right.
\end{equation}
\end{proposition}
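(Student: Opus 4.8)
The plan is to show that the $n+1$ functions listed in \eqref{fundset1} are (a) solutions of $D^{(n+1)}y=0$ and (b) linearly independent, and then to invoke the standard fact that a linear homogeneous ODE of order $n+1$ has an $(n+1)$-dimensional solution space. Part (b) is precisely Proposition \ref{linind}, so the real content is part (a) together with checking that $D^{(n+1)}y=0$ really is an ODE of order $n+1$.

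For the order count, I would first unwind the alternating definition: $D^{(m)}$ is the composition, read outermost-to-innermost, $D\circ\wt D\circ D\circ\cdots$ of $m$ factors, the $i$-th factor being $\Phi\,\tfrac{d}{dx}$ for $i$ odd and $\tfrac1\Phi\,\tfrac{d}{dx}$ for $i$ even. Keeping only the top-order contribution, the coefficient of $y^{(m)}$ in $D^{(m)}y$ is the telescoping product $\Phi\cdot\tfrac1\Phi\cdot\Phi\cdots$, which equals $1$ when $m$ is even and $\Phi$ when $m$ is odd; in either case it is nonvanishing on $[a,b]$. Hence, assuming $\Phi$ smooth enough that the remaining coefficients (built from $\Phi$, $1/\Phi$ and derivatives of $\Phi$) are continuous — the same standing regularity used in Proposition \ref{linind} — the equation $D^{(n+1)}y=0$ is linear, homogeneous, of order exactly $n+1$, so its solution space has dimension $n+1$.

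For part (a), the key step is to peel the innermost $k$ factors off $D^{(n+1)}$: since the outermost factor of $D^{(n+1)}$ is $D$, associativity gives $D^{(n+1)}=D^{(n+1-k)}\circ B_k$, where $B_k$ is the composition of the innermost $k$ factors; tracking parities, the outermost factor of $B_k$ is $D$ when $n$ and $k$ have opposite parity and $\wt D$ when they have the same parity, so $B_k=D^{(k)}$ in the first case and $B_k=\wt D^{(k)}$ in the second. Matching this against \eqref{defY}, \eqref{defYt}: when $n$ is odd the $k$-th member of $\mathcal S_n$ is $\mathcal Y_k=X^{(k)}$ precisely when $n,k$ have opposite parity and $\mathcal Y_k=\wt X^{(k)}$ precisely when they have the same parity, and when $n$ is even the same alternative holds for the $k$-th member $\wt{\mathcal Y}_k$ of $\wt{\mathcal S}_n$. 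Consequently, in every case $B_k$ applied to the relevant $k$-th basis function is either $D^{(k)}X^{(k)}$ or $\wt D^{(k)}\wt X^{(k)}$, and Proposition \ref{DX} (taken with $n=k$) gives $D^{(k)}X^{(k)}=\wt D^{(k)}\wt X^{(k)}=k!$. Since $D$ and $\wt D$ annihilate constants, $D^{(n+1-k)}(k!)=0$ whenever $n+1-k\ge1$, i.e. for every $0\le k\le n$. Thus each of the $n+1$ functions in \eqref{fundset1} solves $D^{(n+1)}y=0$, and being linearly independent by Proposition \ref{linind}, they form a fundamental set.

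Finally, the claim \eqref{fundset2} for $\wt D^{(n+1)}y=0$ I would obtain by applying the permutation operator $\mathcal R_\Phi$, which interchanges $\Phi\leftrightarrow 1/\Phi$, hence $D\leftrightarrow\wt D$, $X^{(k)}\leftrightarrow\wt X^{(k)}$ and $\mathcal Y_k\leftrightarrow\wt{\mathcal Y}_k$, carrying the result for $D^{(n+1)}$ over to $\wt D^{(n+1)}$ (equivalently, one repeats the argument above with the roles of $D,\wt D$ and of $X^{(k)},\wt X^{(k)}$ swapped). The one point requiring care is the parity bookkeeping in the peeling step — ensuring the inner block $B_k$ comes out with exactly the type ($D^{(k)}$ versus $\wt D^{(k)}$) that pairs with the type ($X^{(k)}$ versus $\wt X^{(k)}$) of the function it hits — together with the routine but essential check that the leading coefficient of $D^{(n+1)}$ never vanishes, without which the $(n+1)$-dimensionality of the solution space (and hence the word "fundamental") would not be justified.
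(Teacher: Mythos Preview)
Your proposal is correct and follows essentially the same approach as the paper: the paper's proof also peels $D^{(n+1)}$ into $D^{(n+1-k)}$ composed with either $D^{(k)}$ or $\wt D^{(k)}$ (according to the parity of $k$ relative to the fixed parity of $n$), applies Proposition \ref{DX} to get $k!$, and then kills the constant with the remaining $n+1-k\ge 1$ factors. Your version is slightly more thorough in that you make the order-$(n+1)$ check and the appeal to Proposition \ref{linind} explicit (the paper leaves these implicit, deferring the leading-coefficient computation to the paragraph after the proof), and you obtain \eqref{fundset2} via the $\mathcal R_\Phi$ symmetry rather than by repeating the case split, but these are cosmetic differences rather than a different route.
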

\begin{proof}
Let us first consider the case $n$ odd. For  $0\leq k\leq n$ we  obtain
$$
D^{(n+1)}\mathcal Y_k=\left\{
\begin{array}{ll}
D^{(n+1-k)}\big(\wt D^{(k)}\wt X^{(k)}\big)=D^{(n+1-k)}k!=0, & k\text{ odd},  \\*[2ex]
D^{(n+1-k)}\big(D^{(k)}X^{(k)}\big)=D^{(n+1-k)}k!=0, & k\text{ even} ,
\end{array}
\right.
$$
where proposition \ref{DX} was used. Similarly, we show that $\wt D^{(n+1)}\wt{\mathcal Y}_k=0$ when $n$ is odd and $0\leq k\leq n$. 

For $n$ even and $0\leq k\leq n$ we have
$$
D^{(n+1)}\wt{\mathcal Y}_k=\left\{
\begin{array}{ll}
D^{(n+1-k)}\big(D^{(k)}X^{(k)}\big)=D^{(n+1-k)}k!=0, & k\text{ odd},  \\*[2ex]
D^{(n+1-k)}\big(\wt D^{(k)}\wt X^{(k)}\big)=D^{(n+1-k)}k!=0, & k\text{ even} .
\end{array}
\right.
$$
In the same way, we find that $\wt D^{(n+1)}\mathcal Y_k=0$ when $n$ is even and $0\leq k\leq n$.
\end{proof}

To demonstrate Proposition \ref{linind} it was sufficient to show that $\mathcal W_n(x_0)\neq 0$ and $\wt{\mathcal W}_n(x_0)\neq 0$ .  Nervertheless, in the following it will be useful to obtain an explicit form of the Wronskians $\mathcal W_n(x)$ and $\wt {\mathcal W}_n(x)$ for all $x$ in the interval $[a,b]$. For that we use explicitly the Abel's identity. From Proposition \ref{fundset} the set $\mathcal S_n=\{\mathcal Y_0(x_0,x),\ldots ,\mathcal Y_n(x_0,x)\}$ is a fundamental set of solutions for the ordinary differential equations $D^{(n+1)}y(x)=0$ with $n$ odd and $\wt D^{(n+1)}y(x)=0$ with $n$ even. By developping these equations in terms of the usual derivatives, i.e.
$$
D^{(n+1)}y(x)=a_{n+1}(x)y^{(n+1)}(x)+a_n(x)y^{(n)}(x)+\cdots +a_{1}(x)y'(x)=0
$$
and
$$
\wt D^{(n+1)}y(x)=\wt a_{n+1}(x)y^{(n+1)}(x)+\wt a_n(x)y^{(n)}(x)+\cdots +\wt a_{1}(x)y'(x)=0
$$
we find
$$
\begin{array}{ll}
a_{n+1}(x)=1,\qquad a_n(x)=-\displaystyle\frac{n+1}{2}\frac{\Phi '(x)}{\Phi (x)}, & n\text{ odd}, \\*[3ex]
\wt a_{n+1}(x)=\displaystyle\frac{1}{\Phi (x)},\qquad \wt a_n(x)=-\displaystyle\frac{n}{2}\frac{\Phi '(x)}{\Phi (x)^2}, & n\text{ even},
\end{array}
$$
such that $\mathcal W_n(x)=\mathcal W_n(x_0)\exp\Big(-\displaystyle\int_{x_0}^x\frac{a_n(\xi)}{a_{n+1}(\xi)}d\xi\Big)$ by the Abel's identity. Explicitly, we find
\begin{equation}
\label{Wn}
\mathcal W_n(x)=\left\{\begin{array}{ll}
\alpha_n\Big(\sqrt{\Phi (x)}\Big)^{n+1}, & n \text{ odd},\\*[2ex]
\alpha_n \Big(\sqrt{\Phi (x)}\Big)^{n}, & n \text{ even}.
\end{array}
\right.
\end{equation}
By similar calculation, the set $\wt{\mathcal S}_n=\{\wt{\mathcal Y}_0(x_0,x),\ldots ,\wt{\mathcal Y}_n(x_0,x)\}$ is a fundamental set of solutions for the ordinary differential equations $D^{(n+1)}y(x)=0$ with $n$ even and $\wt D^{(n+1)}y(x)=0$ with $n$ odd. We find
\begin{equation}
\label{Wtn}
\wt{\mathcal W}_n(x)=\left\{\begin{array}{ll}
\alpha_n\Big(\displaystyle\frac{1}{\sqrt{\Phi (x)}}\Big)^{n+1}, & n \text{ odd},\\*[3ex]
\alpha_n \Big(\displaystyle\frac{1}{\sqrt{\Phi (x)}}\Big)^{n}, & n \text{ even}.
\end{array}
\right.
\end{equation}

The function $\wt X^{(n)}/\Phi$ (resp. $X^{(n)}\Phi$) is the function of Cauchy \cite{goursat} used in obtaining a particular solution of the nonhomogneous equation $D^{(n+1)}y(x)=h(x)$ (resp. $\wt D^{(n+1)}y(x)=h(x)$) from their solution of the corresponding homogeneous equation. The particular solutions $y_p(x)$ is given by
$$
y_p(x)=\displaystyle\frac{1}{n!}\int_{x_0}^x \wt X^{(n)}(\xi,x)\frac{h(\xi)}{\Phi (\xi)}d\xi \qquad \text{for equation} \qquad D^{(n+1)}y(x)=h(x)
$$
and
$$
y_p(x)=\displaystyle\frac{1}{n!}\int_{x_0}^x X^{(n)}(\xi,x)\Phi (\xi)h(\xi)d\xi \qquad \text{for equation}\qquad \wt D^{(n+1)}y(x)=h(x).
$$
Indeed, we have
\begin{align*}
D^{(n+1)}y_p(x)&=D\wt D^{(n)}y_p(x)=D\Big[\displaystyle\frac{1}{n!}\int_{x_0}^x \Big(\wt D^{(n)}\wt X^{(n)}(\xi,x)\Big)\frac{h(\xi)}{\Phi (\xi)}d\xi\Big]\\*[2ex]
&=D\int_{x_0}^x \frac{h(\xi)}{\Phi (\xi)}d\xi=h(x).
\end{align*}
Similar calculations are made to show a particular solution of $\wt D^{(n+1)}y(x)=h(x)$.

\begin{theorem}[Supersymmetric $\Phi $-Taylor series]
\label{Taylor}
Let the functions $f(x),\mathcal Y_0(x),\mathcal Y_1(x),\ldots,\mathcal Y_n(x)$ be real-valued functions of class $C^{n+1}$ in the interval $[a,b]$ and $a\leq x_0\leq b$. Then
\begin{equation}
\label{taylor}
f(x)=\sum_{k=0}^n \frac{\mathcal D_kf(x_0)}{k!}\mathcal Y_k(x_0,x)+R_n(x),
\end{equation}
where 
$$
\mathcal D_kf(x)=\left\{
\begin{array}{ll}
\wt D^{(k)}f(x),& k\text{ odd},\\*[2ex]
D^{(k)}f(x), & k\text{ even},
\end{array}
\right. \quad \text{and}\quad R_n(x)= \frac{1}{n!}\int_{x_0}^x \big[\Phi (\xi)\big]^{(-1)^n}\mathcal Y_{n}(\xi,x)\mathcal D_{n+1}f(\xi)d\xi.
$$
\end{theorem}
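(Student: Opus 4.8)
The plan is to prove \eqref{taylor} by induction on $n$, the inductive step consisting of a single integration by parts performed on the remainder term $R_{n-1}$ coming from the formula at level $n-1$.

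Two preliminary identities carry the argument. The first records how $\mathcal{Y}_n(\xi,x)$ depends on its \emph{first} slot. The corollary of Theorem~\ref{symmX} gives $X^{(n)}(x_0,x)=n\int_{x_0}^x \wt{X}^{(n-1)}(t,x)\,\Phi(t)^{-1}\,dt$ for even $n$ and $\wt{X}^{(n)}(x_0,x)=n\int_{x_0}^x \Phi(t)\,X^{(n-1)}(t,x)\,dt$ for odd $n$; in view of \eqref{defY} both read, uniformly in the parity of $n$,
\[
\mathcal{Y}_n(\xi,x)=n\int_\xi^x \big[\Phi(t)\big]^{(-1)^{n-1}}\,\mathcal{Y}_{n-1}(t,x)\,dt ,
\]
so that, differentiating with respect to the lower limit, $\partial_\xi\,\mathcal{Y}_n(\xi,x)=-n\,[\Phi(\xi)]^{(-1)^{n-1}}\,\mathcal{Y}_{n-1}(\xi,x)$ for $n\geq1$. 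The second identity concerns the alternating chain of $\Phi$-derivatives: from $Dh=\Phi h'$, $\wt D h=\Phi^{-1}h'$, the recursions $D^{(k)}h=D(\wt D^{(k-1)}h)$, $\wt D^{(k)}h=\wt D(D^{(k-1)}h)$, and the definition of $\mathcal{D}_k$, one checks (again uniformly in parity) that
\[
\frac{d}{d\xi}\big[\mathcal{D}_n f(\xi)\big]=\big[\Phi(\xi)\big]^{(-1)^n}\,\mathcal{D}_{n+1}f(\xi).
\]

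The base case $n=0$ is immediate: since $\mathcal{Y}_0\equiv1$, $\mathcal{D}_0f=f$, $[\Phi]^{(-1)^0}=\Phi$ and $\mathcal{D}_1f=\wt Df=\Phi^{-1}f'$, the right-hand side of \eqref{taylor} collapses to $f(x_0)+\int_{x_0}^x f'(\xi)\,d\xi=f(x)$. For the inductive step, assume \eqref{taylor} holds with $n-1$ in place of $n$; since the finite sums agree, it suffices to show $R_{n-1}(x)=\frac{\mathcal{D}_n f(x_0)}{n!}\,\mathcal{Y}_n(x_0,x)+R_n(x)$. By the first identity the integrand of $R_{n-1}$ equals $\big(\partial_\xi[-\tfrac1{n!}\mathcal{Y}_n(\xi,x)]\big)\mathcal{D}_n f(\xi)$, so integration by parts applies: the boundary term at $\xi=x$ vanishes because $\mathcal{Y}_n(x,x)=0$ for $n\geq1$, the boundary term at $\xi=x_0$ produces $\tfrac1{n!}\mathcal{Y}_n(x_0,x)\,\mathcal{D}_n f(x_0)$, and the leftover integral $\tfrac1{n!}\int_{x_0}^x \mathcal{Y}_n(\xi,x)\,\tfrac{d}{d\xi}[\mathcal{D}_n f(\xi)]\,d\xi$ becomes, after invoking the second identity, exactly $R_n(x)$. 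Substituting this decomposition of $R_{n-1}$ into the inductive hypothesis yields \eqref{taylor} at level $n$.

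The computations themselves — the two identities and the integration by parts — are routine once set up; the only genuine subtlety is the parity bookkeeping, i.e. keeping straight which of $D$, $\wt D$, $\Phi$, $\Phi^{-1}$ appears according as $n$ is even or odd, and in particular recognising that the differentiation in the integration by parts acts on the \emph{base point} of $\mathcal{Y}_n$. Since the defining recursion \eqref{Xtdef} places the base point simultaneously in the integration limit and inside the integrand, it is precisely here that Theorem~\ref{symmX}, through its corollary relocating that dependence entirely to the limit, is indispensable.
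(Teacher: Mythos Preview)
Your proof is correct and takes a genuinely different route from the paper's. The paper does not argue by induction at all: it invokes Widder's abstract generalized Taylor theorem (with coefficients and basis functions expressed through Wronskians of $\mathcal Y_0,\dots,\mathcal Y_n$), then uses the closed form $\mathcal W_n(x)=\alpha_n\,\Phi(x)^{\lfloor (n+1)/2\rfloor}$ obtained earlier via Abel's identity to compute the auxiliary functions $\phi_k(x)=k\,\Phi(x)^{(-1)^{k+1}}$, from which Widder's data $g_k$ and $L_k$ are identified with $\mathcal Y_k/k!$ and $\mathcal D_k$ (up to the stray $\Phi(x_0)$ factors that cancel). By contrast, you bypass Widder entirely: the two parity-uniform identities you isolate, $\partial_\xi\mathcal Y_n(\xi,x)=-n\,\Phi(\xi)^{(-1)^{n-1}}\mathcal Y_{n-1}(\xi,x)$ and $(\mathcal D_n f)'=\Phi^{(-1)^n}\mathcal D_{n+1}f$, make a single integration by parts turn $R_{n-1}$ into the next Taylor term plus $R_n$. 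Your argument is self-contained and more elementary, needing only the corollary of Theorem~\ref{symmX}; the paper's approach, on the other hand, situates the result within a classical framework and exploits the Wronskian machinery already developed, which it then reuses to extract the explicit coefficients $a_k(x)$, $\wt a_k(x)$ in the proposition that follows.
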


\begin{proof}
From Proposition \ref{linind} the Wronskian $\mathcal W_n(x)>0$ in the interval $[a,b]$. Now considering the result obtained by Widder  \cite[p.138]{widder1}, we have 
\begin{equation}
f(x)=\sum_{k=0}^n L_kf(x_0)\,g_k(x_0,x)+R_n(x),
\end{equation}
where the functions $g_k(x_0,x)$ are defined as $g_0(x_0,x)=1$ and
\begin{equation}
g_k(x_0,x)=\left(\frac{1}{\mathcal W_k(x_0)}\right)\left|
\begin{array}{cccc}
1 & \mathcal Y_1(x_0,x_0) & \cdots & \mathcal Y_k(x_0,x_0)\\
0 & \mathcal Y_1^{(1)}(x_0,x_0) & \cdots & \mathcal Y_k^{(1)}(x_0,x_0)\\
\vdots & \vdots & \ddots & \vdots \\
0 & \mathcal Y_1^{(k-1)}(x_0,x_0) & \cdots & \mathcal Y_k^{(k-1)}(x_0,x_0)\\
1 & \mathcal Y_1(x_0,x) & \cdots & \mathcal Y_k(x_0,x)
\end{array}
\right|,
\end{equation}
while the operators $L_k$ are defined as $L_0=\text{id}$ and
\begin{equation}
L_kf(x)=\frac{\mathcal W[\mathcal Y_0(x_0,x),\mathcal Y_1(x_0,x),\ldots,\mathcal Y_{k-1}(x_0,x),f(x)]}{\mathcal W_{k-1}(x)}.
\end{equation}
However, these functions $g_k$ and operators $L_k$ defined in terms of a determinant
 can be expressed in a more convenient way in terms of functions $\phi_k(x)$ given by
$$
\phi_0(x)=1,\qquad \phi_1(x)=\Phi (x),\quad \phi_i(x)=\frac{\mathcal W_i(x)\mathcal W_{i-2}(x)}{\big[\mathcal W_{i-1}(x)\big]^2},\quad 2\leq i\leq n-1.
$$
Indeed, from \cite{widder1} we have
$$
\begin{array}{rcl}
g_k(x_0,x)&=& \displaystyle\frac{\phi_0(x)}{\phi_0(x_0)\cdots \phi_k(x_0)}\displaystyle\int_{x_0}^x \phi_1(\xi_1) \displaystyle\int_{x_0}^{\xi_1}\\*[3ex]
&& \cdots \displaystyle\int_{x_0}^{\xi_{k-2}} \phi_{k-1}(\xi_{k-1}) \displaystyle\int_{x_0}^{\xi_{k-1}}\phi_k(\xi_k)d\xi_k d\xi_{k-1}\cdots d\xi_1
\end{array}
$$
and
$$
L_kf(x)=\phi_0(x)\phi_1(x)\cdots \phi_{k-1}(x)\frac{d}{dx}\frac{1}{\phi_{k-1}(x)} \frac{d}{dx}\frac{1}{\phi_{k-2}(x)}\frac{d}{dx}\cdots \frac{d}{dx}\frac{1}{\phi_{1}(x)} \frac{d}{dx}\frac{f(x)}{\phi_{0}(x)}.
$$
In our case the functions $\phi_i(x)$ are calculated from equation \eqref{Wn}. We obtain $\phi_0(x)=1$ and $\phi_k(x)=k\left(\displaystyle\frac{1}{\Phi (x)}\right)^{(-1)^k}$. It is now easy to identify the functions $g_k(x_0,x)$ and the operators $L_k$ as
$$
g_k(x_0,x)=\left\{
\begin{array}{ll}
\displaystyle\frac{1}{k!}\frac{1}{\Phi (x_0)}\wt X^{(k)}(x_0,x),& k\text{ odd},\\*[2ex]
\displaystyle\frac{1}{k!}X^{(k)}(x_0,x),& k\text{ even},
\end{array}
\right.
$$
and
\begin{equation}
\label{Lk}
L_kf(x)= \left\{
\begin{array}{ll}
\Phi (x)\wt D^{(k)}f(x),& k\text{ odd},\\*[2ex]
D^{(k)}f(x),& k\text{ even},
\end{array}
\right.
\end{equation}
for $k\geq 0$.

Now considering the rest, we have that $R_n(x)$ is of the form
$$
R_n(x)=\int_{x_0}^x g_n(\xi,x)L_{n+1}f(\xi)d\xi,
$$
which is equivalent to write
$$
R_n(x)=\frac{1}{n!}\int_{x_0}^x \big[\Phi (\xi)\big]^{(-1)^n}\mathcal Y_{n}(\xi,x)\mathcal D_{n+1}f(\xi)d\xi.
$$
\end{proof}

The proof of Theorem \ref{Taylor} gives us the possibility to obtain formulas of the coefficients $a_k(x),\wt a_k(x)$ for the developpement of the generalized derivatives $D^{(n)}y(x),\wt D^{(n)}y(x)$ in terms of the usual derivatives, i.e.
$$
D^{(n)}f(x)=\sum_{k=1}^{n} a_k(x)\,f^{(k)}(x) \qquad \text{ and } \qquad \wt D^{(n)}f(x)=\sum_{k=1}^{n} \wt a_k(x)\,f^{(k)}(x).
$$
Indeed, from \eqref{Lk} and its associate $\Phi $-conjugaison, we have
$$
D^{(n)}f(x)=\left\{
\begin{array}{ll}
\Phi (x)\wt L_nf(x), & n \text{ odd},\\*[2ex]
L_nf(x), & n\text{ even},
\end{array}
\right. \quad \text{and} \quad \wt D^{(n)}f(x)=\left\{
\begin{array}{ll}
\displaystyle\frac{1}{\Phi (x)} L_nf(x), & n \text{ odd},\\*[2ex]
\wt L_nf(x), & n\text{ even}.
\end{array}
\right.
$$

Considering first the case $n>0$ and even we find
\begin{align*}
D^{(n)}f(x)&=\frac{1}{\mathcal W_{n-1}(x)}
\left|
\begin{array}{ccccc}
\mathcal Y_0 & \mathcal Y_1 & \hdots & \mathcal Y_{n-1} & f \\
\mathcal Y_0' & \mathcal Y_1' & \hdots & \mathcal Y_{n-1}' & f' \\
\vdots & \vdots & \ddots & \vdots & \vdots\\ 
\mathcal Y_0^{(n)} & \mathcal Y_1^{(n)} & \hdots & \mathcal Y_{n-1}^{(n)} & f^{(n)} 
\end{array}
\right|\\*[2ex]
&= \frac{1}{\alpha_{n-1}\big[\Phi (x)\big]^{\frac{n}{2}}} \left|
\begin{array}{ccccc}
 \mathcal Y_1' & \mathcal Y_2' & \hdots & \mathcal Y_{n-1}' & f' \\
\mathcal Y_1'' & \mathcal Y_2'' & \hdots & \mathcal Y_{n-1}'' & f'' \\
\vdots & \vdots &\ddots & \vdots & \vdots\\ 
\mathcal Y_1^{(n)} & \mathcal Y_2^{(n)} & \hdots & \mathcal Y_{n-1}^{(n)} & f^{(n)} 
\end{array}
\right| \\*[2ex]
&= \frac{1}{\alpha_{n-1}\big[\Phi (x)\big]^{\frac{n}{2}}}\sum_{k=1}^n \varepsilon_{i_1,\ldots, i_{n-1},k} \left|
\begin{array}{ccccc}
 \mathcal Y_1^{(i_1)} & \mathcal Y_2^{(i_1)} & \hdots & \mathcal Y_{n-1}^{(i_1)} \\
\mathcal Y_1 ^{(i_2)} & \mathcal Y_2^{(i_2)} & \hdots & \mathcal Y_{n-1}^{(i_2)} \\
\vdots & \vdots & \ddots & \vdots \\ 
\mathcal Y_1^{(i_{n-1})} & \mathcal Y_2^{(i_{n-1})} & \hdots & \mathcal Y_{n-1}^{(i_{n-1})} 
\end{array}
\right|f^{(k)}(x)
\end{align*}
where $1\leq i_1<i_2<\cdots <i_{n-1}\leq n$
and $\varepsilon_{i_1i_2\cdots i_{n-1}k}$ represents the $n$-dimensional Levi-Cevita symbol. By a similar calculation for $n$ odd, we finally obtain the following results for the coefficients $a_k(x)$ (coefficients $\wt a_k(x)$ are obtained by $\Phi $-conjugaison).

\begin{proposition}
The $n$-order $\Phi $-generalized derivatives $D^{(n)}f(x)$ and $\wt D^{(n)}f(x)$ can be expanded in terms of the usual derivatives $D^{(n)}f(x)=\sum_{k=1}^{n} a_k(x)\,f^{(k)}(x)$ and $\wt D^{(n)}f(x)=\sum_{k=1}^{n} \wt a_k(x)\,f^{(k)}(x)$, where
$$
a_k(x)=\left\{
\begin{array}{ll}
\varepsilon_{i_1,\ldots, i_{n-1},k}\displaystyle\frac{\big[\Phi (x)\big]^{\frac{n+1}{2}}}{\alpha_{n-1}} \left|
\begin{array}{ccccc}
 \wt{\mathcal Y}_1^{(i_1)} & \wt{\mathcal Y}_2^{(i_1)} & \hdots & \wt{\mathcal Y}_{n-1}^{(i_1)} \\
\wt{\mathcal Y}_1 ^{(i_2)} & \wt{\mathcal Y}_2^{(i_2)} & \hdots & \wt{\mathcal Y}_{n-1}^{(i_2)} \\
\vdots & \vdots & \ddots & \vdots \\ 
\wt{\mathcal Y}_1^{(i_{n-1})} & \wt{\mathcal Y}_2^{(i_{n-1})} & \hdots & \wt{\mathcal Y}_{n-1}^{(i_{n-1})} 
\end{array}
\right|, & n\text{ odd}, \\*[7ex]
\varepsilon_{i_1,\ldots, i_{n-1},k}\displaystyle\frac{1}{\alpha_{n-1} \big[\Phi (x)\big]^{\frac{n}{2}}} \left|
\begin{array}{ccccc}
 \mathcal Y_1^{(i_1)} & \mathcal Y_2^{(i_1)} & \hdots & \mathcal Y_{n-1}^{(i_1)} \\
\mathcal Y_1 ^{(i_2)} & \mathcal Y_2^{(i_2)} & \hdots & \mathcal Y_{n-1}^{(i_2)} \\
\vdots & \vdots & \ddots & \vdots \\ 
\mathcal Y_1^{(i_{n-1})} & \mathcal Y_2^{(i_{n-1})} & \hdots & \mathcal Y_{n-1}^{(i_{n-1})} 
\end{array}
\right|, & n\text{ even},
\end{array}
\right.
$$
$$
\wt a_k(x)=\left\{
\begin{array}{ll}
\varepsilon_{i_1,\ldots, i_{n-1},k}\displaystyle\frac{1}{\alpha_{n-1}\big[\Phi (x)\big]^{\frac{n+1}{2}}} \left|
\begin{array}{ccccc}
 \mathcal Y_1^{(i_1)} & \mathcal Y_2^{(i_1)} & \hdots & \mathcal Y_{n-1}^{(i_1)} \\
\mathcal Y_1 ^{(i_2)} & \mathcal Y_2^{(i_2)} & \hdots & \mathcal Y_{n-1}^{(i_2)} \\
\vdots & \vdots & \ddots & \vdots \\ 
\mathcal Y_1^{(i_{n-1})} & \mathcal Y_2^{(i_{n-1})} & \hdots & \mathcal Y_{n-1}^{(i_{n-1})} 
\end{array}
\right|, & n\text{ odd}, \\*[7ex]
\varepsilon_{i_1,\ldots, i_{n-1},k}\displaystyle\frac{\big[\Phi (x)\big]^{\frac{n}{2}}}{\alpha_{n-1}} \left|
\begin{array}{ccccc}
 \wt{\mathcal Y}_1^{(i_1)} &  \wt{\mathcal Y}_2^{(i_1)} & \hdots &  \wt{\mathcal Y}_{n-1}^{(i_1)} \\
 \wt{\mathcal Y}_1 ^{(i_2)} &  \wt{\mathcal Y}_2^{(i_2)} & \hdots &  \wt{\mathcal Y}_{n-1}^{(i_2)} \\
\vdots & \vdots & \ddots & \vdots \\ 
 \wt{\mathcal Y}_1^{(i_{n-1})} &  \wt{\mathcal Y}_2^{(i_{n-1})} & \hdots &  \wt{\mathcal Y}_{n-1}^{(i_{n-1})} 
\end{array}
\right|, & n\text{ even},
\end{array}
\right.
$$

where 
$$
1\leq i_1<i_2<\cdots <i_{n-1}\leq n \qquad \text{and}\qquad 1\leq k\leq n.
$$
\end{proposition}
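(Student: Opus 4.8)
The plan is to read the coefficients directly off the closed form of the operator $L_n$ already obtained in the proof of Theorem~\ref{Taylor}. There, using Widder's formula \cite{widder1} together with the identification $\phi_0(x)=1$ and $\phi_k(x)=k\big(1/\Phi(x)\big)^{(-1)^k}$, it was shown that $D^{(n)}f(x)=\Phi(x)\,\wt L_nf(x)$ for $n$ odd and $D^{(n)}f(x)=L_nf(x)$ for $n$ even, where
$$
L_nf(x)=\frac{\mathcal W\big[\mathcal Y_0(x_0,x),\ldots,\mathcal Y_{n-1}(x_0,x),f(x)\big]}{\mathcal W_{n-1}(x)}
$$
and $\wt L_n$ is the $\Phi$-conjugate operator, built from the $\wt{\mathcal Y}_k$ and $\wt{\mathcal W}_{n-1}$. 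Hence the whole statement reduces to expanding the $(n+1)\times(n+1)$ Wronskian determinant in the numerator and then inserting the explicit Wronskians.

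First I would expand $\mathcal W[\mathcal Y_0,\ldots,\mathcal Y_{n-1},f]$ along its last column, the column $(f,f',\ldots,f^{(n)})$, by the cofactor expansion, obtaining $\sum_{k=0}^{n}\pm\,f^{(k)}(x)\,M_k(x)$, where $M_k$ is the $n\times n$ minor with columns $\mathcal Y_0,\ldots,\mathcal Y_{n-1}$ and derivative-order rows $\{0,1,\ldots,n\}\setminus\{k\}$. Then I would use $\mathcal Y_0\equiv 1$: for $k=0$ no zeroth-order row survives, so the first column of $M_0$ vanishes identically and that term drops, which is why the sum begins at $k=1$; for $k\ge 1$ the first column of $M_k$ equals $(1,0,\ldots,0)$, so one further cofactor expansion collapses $M_k$ to the $(n-1)\times(n-1)$ minor of $\mathcal Y_1,\ldots,\mathcal Y_{n-1}$ whose derivative-order rows are $\{1,\ldots,n\}\setminus\{k\}=\{i_1<\cdots<i_{n-1}\}$. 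Absorbing the two cofactor signs into the $n$-dimensional Levi-Civita symbol $\varepsilon_{i_1\cdots i_{n-1}k}$ produces exactly the determinantal block in the statement; this computation is in fact already carried out for $n$ even in the paragraph preceding the proposition, and the case $n$ odd is identical after replacing $\mathcal Y_k$ by $\wt{\mathcal Y}_k$.

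It then remains to divide by $\mathcal W_{n-1}(x)$ (resp.\ $\wt{\mathcal W}_{n-1}(x)$) and to carry the prefactor $\Phi(x)$ in the odd case. Substituting \eqref{Wn}--\eqref{Wtn}: for $n$ even, $n-1$ is odd, so $\mathcal W_{n-1}(x)=\alpha_{n-1}\big(\sqrt{\Phi(x)}\big)^{(n-1)+1}=\alpha_{n-1}[\Phi(x)]^{n/2}$, giving the $n$-even formula for $a_k$; for $n$ odd, $n-1$ is even, so $\wt{\mathcal W}_{n-1}(x)=\alpha_{n-1}\big(\Phi(x)\big)^{-(n-1)/2}$, and multiplying $\wt L_nf$ by $\Phi(x)$ yields the prefactor $[\Phi(x)]^{(n+1)/2}/\alpha_{n-1}$ in front of the $\wt{\mathcal Y}$-block. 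Finally, applying the permutation $\mathcal R_\Phi$, which interchanges $\mathcal Y_m\leftrightarrow\wt{\mathcal Y}_m$, $\Phi\leftrightarrow 1/\Phi$ and hence $D^{(n)}\leftrightarrow\wt D^{(n)}$, to each of the two cases delivers the formulas for $\wt a_k$. The only genuinely delicate point is the bookkeeping: matching the two nested cofactor signs to a single Levi-Civita symbol and keeping the half-integer powers of $\Phi$ straight in each parity case; there is no conceptual obstacle beyond that.
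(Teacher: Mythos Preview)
Your approach is essentially identical to the paper's: it too uses the identification \eqref{Lk} from the proof of Theorem~\ref{Taylor}, expands the Wronskian numerator of $L_n$ (resp.\ $\wt L_n$) by cofactors using $\mathcal Y_0\equiv 1$ and the $f$-column, substitutes the explicit Wronskians \eqref{Wn}--\eqref{Wtn}, and then obtains the $\wt a_k$ by $\Phi$-conjugation. The only cosmetic difference is the order of the two cofactor expansions (you do the $f$-column first, the paper does the $\mathcal Y_0$-column first), which changes nothing.
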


\section{Volterra composition of the $\Phi $-power functions}
Let us first introduce a composition introduced by Volterra \cite{volterra2005theory}. For $f(x,y)$ and $g(x,y)$ two functions such that 
\begin{equation}
\label{volterraint}
\int_x^y f(x,\xi)g(\xi,y)\,d\xi.
\end{equation}
exists for $a\leq x\leq y\leq b$. The integral \eqref{volterraint} is called the {\em composition of the first type} and is denoted by $\big(f\star g\big)(x,y)$.

As shown in \cite{volterra2005theory}, the composition of the first type is associative and distributive.  Moreover, if $f\star g=g\star f$ then the two functions $f$ and $g$ are said to be {\em permutable}. The following notation will be used when the function $f$ is composed with itself:
\begin{equation*}
f^{\angl{n}}= \big(f^{\angl{n-1}}\star f\big),\qquad n\geq 1.
\end{equation*}
Here $f^{\angl{0}}(x,y)=\delta(y-x)$ and $f^{\angl{1}}(x,y)=f(x,y)$ by definition, where $\delta(y-x)$ is the  Dirac delta function. Hence, it is easy to show that functions $f^{\angl{n}}$ and $f^{\angl{m}}$ are permutable for some positive integers $m,n$.

Let us now introduce the following functions
\begin{equation*}
\mathbf{1}(x,y) \equiv 1 \qquad \text{ and }\qquad \sigma(x,y) \equiv \frac{\Phi (y)}{\Phi (x)}.
\end{equation*}

By direct calculations, we find that 
\begin{align*}
\big(\mathbf{1}\star \sigma\big)(x_0,x)&=\Phi (x) \X{1}{x_0,x}, & \big(\mathbf{1}\star \sigma^{-1}\big)(x_0,x)&=\frac{1}{\Phi (x)}\Xt{1}{x_0,x} \\*[2ex]
\big(\mathbf{1}\star \sigma\star \mathbf 1\big)(x_0,x)&=\frac{\X{2}{x_0,x}}{2!}, & \big(\mathbf{1}\star \sigma^{-1}\star \mathbf 1\big)(x_0,x)&=\frac{\Xt{2}{x_0,x}}{2!} \\*[2ex]
\big(\mathbf{1}\star \sigma\big)^{\angl{2}}(x_0,x)&=\Phi (x)\frac{\X{3}{x_0,x}}{3!}, & \big(\mathbf{1}\star \sigma^{-1}\big)^{\angl{2}}(x_0,x)&=\frac{1}{\Phi (x)}\frac{\Xt{3}{x_0,x}}{3!}, \ldots
\end{align*}
Taking into account that $\big(\mathbf{1}\star \sigma\big)=\Phi X^{(1)}$ and  $\big(\mathbf{1}\star \sigma^{-1}\big)=\displaystyle \frac{1}{\Phi }\wt X^{(1)}$, we have  for $n\geq 1$ 
\begin{align}
\label{psiX1}
\Big(\Phi X^{(1)}\Big)^{\angl{n}}=\Phi \frac{X^{(2n-1)}}{(2n-1)!}, \qquad \Big(\frac{1}{\Phi } \wt X^{(1)}\Big)^{\angl{n}}=\frac{1}{\Phi }\frac{\wt X^{(2n-1)}}{(2n-1)!}
\end{align}
and
\begin{align}
\label{psiX2}
\Big(\Phi X^{(1)}\Big)^{\angl{n}}\star \mathbf 1=\frac{X^{(2n)}}{(2n)!}, \qquad \Big(\frac{1}{\Phi } \wt X^{(1)}\Big)^{\angl{n}}\star \mathbf 1=\frac{\wt X^{(2n)}}{(2n)!}.
\end{align}

\begin{proposition}
The following identities are valid for the Volterra composition of the first type for the $\Phi $-power functions:
\label{multiplication}
\begin{align*}
\Phi \Xn{2n-1}\star \Xn{2m}=  A_{n,m} \Xn{2n+2m},\quad \frac{1}{\Phi }\Xtn{2n-1}\star \Xtn{2m} = A_{n,m} \Xtn{2n+2m},\\
\Phi \Xn{2n-1}\star \Xn{2m-1}= B_{n,m} \Xn{2n+2m-1},\quad \frac{1}{\Phi }\Xtn{2n-1}\star \Xtn{2m-1} = B_{n,m} \Xtn{2n+2m-1}, 
\end{align*}
where
\begin{equation*}
A_{n,m} := \frac{(2n-1)!(2m)!}{(2n+2m)!}, \qquad  B_{n,m} := \frac{(2n-1)!(2m-1)!}{(2n+2m-1)!},\qquad n,m\geq 1.
\end{equation*}
\end{proposition}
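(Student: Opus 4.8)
The plan is to derive all four identities from the two relations \eqref{psiX1}--\eqref{psiX2} together with two purely algebraic facts about the Volterra composition of the first type: its associativity, and the semigroup property $h^{\angl{n}}\star h^{\angl{m}}=h^{\angl{n+m}}$ (the latter following by an easy induction from the definition $h^{\angl{n}}=h^{\angl{n-1}}\star h$ and associativity). It is enough to prove the two identities stated for $X^{(n)}$; the ones for the $\Phi$-conjugates follow line by line after replacing $\Phi X^{(1)}$ by $\frac{1}{\Phi}\wt X^{(1)}$ and using the second halves of \eqref{psiX1}--\eqref{psiX2}. Writing $h:=\Phi X^{(1)}$, viewed as a function of two variables, the relations \eqref{psiX1}--\eqref{psiX2} become
$$
\Phi\Xn{2n-1}=(2n-1)!\,h^{\angl{n}},\qquad \Xn{2n}=(2n)!\,\big(h^{\angl{n}}\star\mathbf 1\big),\qquad \Xn{2n-1}=\frac{(2n-1)!}{\Phi}\,h^{\angl{n}},
$$
with $\Phi$ evaluated at the outer (right-hand) argument in the last equality.

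For the first identity I would substitute these expressions and compute, using associativity and then the semigroup property,
$$
\Phi\Xn{2n-1}\star\Xn{2m}=(2n-1)!(2m)!\;h^{\angl{n}}\star\big(h^{\angl{m}}\star\mathbf 1\big)=(2n-1)!(2m)!\;h^{\angl{n+m}}\star\mathbf 1,
$$
after which \eqref{psiX2} at index $n+m$ gives exactly $\frac{(2n-1)!(2m)!}{(2n+2m)!}\Xn{2n+2m}=A_{n,m}\Xn{2n+2m}$.

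For the second identity the one extra ingredient is the observation that in $(f\star g)(x,y)=\int_x^y f(x,\xi)\,g(\xi,y)\,d\xi$ a factor of $g$ depending only on the outer variable $y$ may be pulled out of the integral. This yields
$$
\Phi\Xn{2n-1}\star\Xn{2m-1}=(2n-1)!(2m-1)!\;\frac{1}{\Phi}\big(h^{\angl{n}}\star h^{\angl{m}}\big)=\frac{(2n-1)!(2m-1)!}{\Phi}\,h^{\angl{n+m}},
$$
and inserting $h^{\angl{n+m}}=\Phi\,\Xn{2n+2m-1}/(2n+2m-1)!$ from \eqref{psiX1} makes the two factors of $\Phi$ cancel, leaving $B_{n,m}\Xn{2n+2m-1}$.

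There is no genuine analytic difficulty here: all the iterated integrals involved exist and the manipulations are legitimate because $\Phi$ is continuous and nonvanishing on $[a,b]$. The one place where care is required -- and the main source of a possible slip -- is keeping track of which argument ($x$, $y$, or an internal integration variable) each factor of $\Phi$ is attached to when it is moved into or out of the Volterra composition; once that bookkeeping is carried out consistently, the two conjugate identities are obtained by the very same computation with $h$ replaced by $\frac{1}{\Phi}\wt X^{(1)}$.
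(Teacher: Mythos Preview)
Your proof is correct and follows essentially the same route as the paper: both arguments rewrite the $\Phi$-powers via \eqref{psiX1}--\eqref{psiX2} in terms of $h=\Phi X^{(1)}$ (resp.\ $\frac{1}{\Phi}\wt X^{(1)}$), then use associativity and the semigroup law $h^{\angl{n}}\star h^{\angl{m}}=h^{\angl{n+m}}$ to collapse the composition. The paper only spells out the first identity and dismisses the rest with ``similarly''; your added remark about pulling the $\frac{1}{\Phi}$ factor (depending on the outer variable) through the integral is exactly the small bookkeeping step needed for the odd--odd case, so your write-up is in fact slightly more complete.
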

\begin{proof}
Let us calculate the first identity. We have
\begin{align*}
\Phi  X^{(2n-1)}\star X^{(2m)}&=(2n-1)!\big(\Phi  X^{(1)}\big)^{\angl{n}}\star (2m)!\big[\big(\Phi  X^{(1)}\big)^{\angl{m}}\star \mathbf 1\big] \\*[2ex]
&= (2n-1)!(2m)!\big[\big(\Phi  X^{(1)}\big)^{\angl{n+m}}\star \mathbf 1\big] \\*[2ex]
&= (2n-1)!(2m)! \frac{X^{(2n+2m)}}{(2n+2m)!}.
\end{align*}
Other identities are calculated in a similar way.
\end{proof}

In \cite{volterra2005theory} Volterra has shown that given any analytic function $\sum_{n=0}^\infty c_n z^n$ which converges within a certain circle and given any bounded function $f(x,y)$, then $\sum_{n=0}^\infty c_n f^{\angl{n}}$
is convergent for all values of $f(x,y)$. In other words, there is an isomorphism between algebraic formulas which involve only addition and multiplication and those obtained from the algebraic formulas by replacing the powers of the variable with  the powers by composition of the corresponding function $f$. It follows that if $F(z)$ and $G(z)$ are two analytic functions and $F(z)\,G(z)=K(z)$ then $F(f^{\angl{1}}) \star G(f^{\angl{1}})=K(f^{\angl{1}})$.

\begin{theorem}
\label{repsolvol}
Assume that on a finite real interval $[a,b]$, equation $-\psi^{\prime\prime}_0(x)+V(x)\psi_0(x) =0 $ possesses a particular solution $\psi_0(x)$ such that $\psi_0^2(x)$ and $1/\psi_0^2(x)$ are continuous on $[a, b]$. Then the general solution of the Schr\"odinger equation $-\psi^{\prime\prime}(x)+V(x)\psi(x)=\lambda\psi(x)$ on $(a, b)$ has the form 
\begin{equation}
\label{gensolsch}
\psi(x)=c_1\psi_0 \int_{x_0}^x \frac{1}{1-\lambda \wt \rho^{\angl{1}}}d\xi
+\frac{c_2}{\psi_0}\left(\frac{\rho^{\angl{1}}}{1-\lambda \rho^{\angl{1}}}\right),
\end{equation}
where 
$$
\rho(x_0,x)=\psi_0^2(x)\int_{x_0}^x \frac{d\xi}{\psi_0^2(\xi)},\qquad   \wt \rho(x_0,x)=\frac{1}{\psi_0^2(x)}\int_{x_0}^x \psi_0^2(\xi) d\xi,
$$
$c_1,c_2$ are two arbitrary complex constants and $x_0$ is an arbitrary fixed point in $[a,b]$. The geometric series in \eqref{gensolsch} converge for every $x$ on the interval $[a,b]$.
\end{theorem}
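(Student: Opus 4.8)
The plan is to reduce the statement to the spectral-parameter power-series representation already in hand, and then to recognize the two series there as Volterra geometric series. Specializing Theorem \ref{kravch} to $p=-1$, $r=1$, $q=V$, $u_0=\psi_0$ — equivalently, taking $\Phi=\psi_0^2$ in Definition \ref{defX} as in the discussion following it — the general solution of $-\psi''+V\psi=\lambda\psi$ on $(a,b)$ is
$$
\psi=c_1\psi_0\sum_{k=0}^{\infty}\frac{\lambda^k}{(2k)!}\Xt{2k}{x_0,x}+c_2\psi_0\sum_{k=0}^{\infty}\frac{\lambda^k}{(2k+1)!}\X{2k+1}{x_0,x},
$$
both series converging uniformly on $[a,b]$ by the estimates obtained just before Theorem \ref{trigo}. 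It therefore suffices to match the first and the second term of \eqref{gensolsch} with the two sums above.

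Two ingredients are needed. First, the identification of the generators: reading the direct computations preceding Proposition \ref{multiplication} with $\Phi=\psi_0^2$ gives $\rho=\mathbf 1\star\sigma=\Phi\,\X{1}{x_0,x}$ and $\wt\rho=\mathbf 1\star\sigma^{-1}=\frac1\Phi\,\Xt{1}{x_0,x}$, since $(\mathbf 1\star\sigma)(x_0,x)=\Phi(x)\int_{x_0}^x d\xi/\Phi(\xi)=\psi_0^2(x)\int_{x_0}^x d\xi/\psi_0^2(\xi)$ and symmetrically for $\sigma^{-1}$; in particular $\rho^{\angl{1}}=\rho$ and $\wt\rho^{\angl{1}}=\wt\rho$. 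Formulas \eqref{psiX1}--\eqref{psiX2} then express every Volterra power of these generators as an odd or even $\Phi$-power: for $n\ge1$,
$$
\rho^{\angl{n}}=\Phi\,\frac{\X{2n-1}{x_0,x}}{(2n-1)!},\qquad \rho^{\angl{n}}\star\mathbf 1=\frac{\X{2n}{x_0,x}}{(2n)!},
$$
and likewise with $(\rho,\Phi,X)$ replaced by $(\wt\rho,1/\Phi,\wt X)$, while $\rho^{\angl{0}}=\wt\rho^{\angl{0}}=\delta$ is the $\star$-identity, so $\rho^{\angl{0}}\star\mathbf 1=\wt\rho^{\angl{0}}\star\mathbf 1=\mathbf 1$. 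Second, Volterra's isomorphism between convergent power series and their composition analogues (stated just before this theorem): as $\psi_0^2$ and $1/\psi_0^2$ are continuous on the compact $[a,b]$, the functions $\rho,\wt\rho$ are bounded on $a\le x\le y\le b$, and since the geometric series $\sum_{n\ge0}z^n$ has positive radius of convergence, the Volterra series $\sum_{n\ge0}\lambda^n\wt\rho^{\angl{n}}$ and $\sum_{n\ge1}\lambda^{n-1}\rho^{\angl{n}}$ converge absolutely and uniformly for every $x\in[a,b]$ and represent $1/(1-\lambda\wt\rho^{\angl{1}})$ and $\rho^{\angl{1}}/(1-\lambda\rho^{\angl{1}})$; this already proves the convergence assertion of the theorem.

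It remains to assemble the pieces. Composing the first Volterra series with $\mathbf 1$ termwise (permitted by the uniform convergence) and using the identities above,
$$
\int_{x_0}^x\frac{d\xi}{1-\lambda\wt\rho^{\angl{1}}}=\Big(\frac1{1-\lambda\wt\rho^{\angl{1}}}\star\mathbf 1\Big)(x_0,x)=1+\sum_{n\ge1}\frac{\lambda^n}{(2n)!}\Xt{2n}{x_0,x}=\sum_{k\ge0}\frac{\lambda^k}{(2k)!}\Xt{2k}{x_0,x},
$$
so the first term of \eqref{gensolsch} equals $c_1\psi_0\sum_{k\ge0}\lambda^k\Xt{2k}{x_0,x}/(2k)!$; and, after a shift of the summation index,
$$
\frac{\rho^{\angl{1}}}{1-\lambda\rho^{\angl{1}}}(x_0,x)=\Phi(x)\sum_{k\ge0}\frac{\lambda^k}{(2k+1)!}\X{2k+1}{x_0,x}=\psi_0^2(x)\sum_{k\ge0}\frac{\lambda^k}{(2k+1)!}\X{2k+1}{x_0,x},
$$
so the second term of \eqref{gensolsch} equals $\frac{c_2}{\psi_0}\psi_0^2\sum_{k\ge0}\lambda^k\X{2k+1}{x_0,x}/(2k+1)!=c_2\psi_0\sum_{k\ge0}\lambda^k\X{2k+1}{x_0,x}/(2k+1)!$. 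Adding the two reproduces the representation of the first paragraph, namely the general solution, which finishes the proof. The only delicate points are the $n=0$ bookkeeping in the two geometric series — retaining the $\star$-identity $\delta$ so that $\int_{x_0}^x(1-\lambda\wt\rho^{\angl{1}})^{-1}d\xi$ contributes the constant term $1=\Xt{0}{x_0,x}/0!$ — and the legitimacy of the termwise composition with $\mathbf 1$ and of the index shift; all of these are covered by the uniform convergence furnished by Volterra's theorem, while the reduction carried out in the first paragraph is routine, being precisely the specialization of Theorem \ref{kravch} already recorded after Definition \ref{defX}.
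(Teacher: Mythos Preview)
Your proof is correct and follows essentially the same route as the paper's: both start from the SPPS representation (Theorem \ref{kravch} specialized to $p=-1$, $r=1$, $q=V$, $u_0=\psi_0$, with $\Phi=\psi_0^2$), identify $\rho=\Phi X^{(1)}$ and $\wt\rho=\Phi^{-1}\wt X^{(1)}$, invoke \eqref{psiX1}--\eqref{psiX2} to rewrite the even and odd $\Phi$-powers as Volterra powers of $\rho,\wt\rho$, recognize the resulting sums as geometric series in the Volterra algebra, and appeal to Volterra's isomorphism for convergence while handling the $k=0$ term via $\delta\star\mathbf 1=1$. The only cosmetic difference is that the paper proceeds from the SPPS series toward the Volterra closed form, whereas you expand the Volterra expressions and match them back to the SPPS series; the ingredients and logic are identical.
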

\begin{proof}
We set $\Phi=\psi_0^2$. Now, from Theorem \ref{kravch}  and   equations \eqref{psiX1},  \eqref{psiX2}, the general solution of $-\psi^{\prime\prime}(x)+V(x)\psi=\lambda\psi(x)$  is given by
\begin{align}
\label{part1}
\psi&=c_1\psi_0  \sum_{k=0}^\infty \lambda^k \frac{\wt X^{(2k)}}{(2k)!}+c_2\psi_0  \sum_{k=0}^\infty \lambda^k \frac{X^{(2k+1)}}{(2k+1)!} \nonumber \\*[2ex]
&=c_1\psi_0 \left\{1+\sum_{k=1}^\infty \lambda^k \left[\left(\frac{1}{\psi_0^2 }\wt X^{(1)}\right)^{\angl{k}}\star \mathbf 1\right]\right\}+c_2\psi_0 \sum_{k=0}^\infty \lambda^k \frac{1}{\psi_0^2 }\left(\psi_0^2  X^{(1)}\right)^{\angl{k+1}} \nonumber \\*[2ex]
&=c_1\psi_0 \left\{1+\sum_{k=1}^\infty \lambda^k \left[\wt \rho^{\angl{k}}\star \mathbf 1\right]\right\}+\frac{c_2}{\psi_0} \sum_{k=0}^\infty \lambda^k \rho^{\angl{k+1}},
\end{align}
where the last equality is valid since
$$
\frac{1}{\psi_0^2}\wt X^{(1)}=\frac{1}{\psi_0^2}\int_{x_0}^x \psi_0^2(\xi)d\xi=\wt \rho,\qquad \psi_0^2 X^{(1)}=\psi_0^2\int_{x_0}^x \frac{d\xi}{\psi_0^2(\xi)}=\rho.
$$
Now, since $\wt \rho^{\angl{0}}\star \mathbf 1=\delta(x-x_0)\star \mathbf 1=1$ and the Volterra composition is associative, equation \eqref{part1} becomes
\begin{align}
\label{eqschvolt}
\psi&=c_1\psi_0\left[ \left(\sum_{k=0}^\infty \lambda^k \wt \rho^{\angl{k}}\right)\star \mathbf 1\right]+\frac{c_2}{\psi_0} \sum_{k=0}^\infty \lambda^k \rho^{\angl{k+1}}.
\end{align}
Considering the geometric series
$$
\Omega(\lambda,z)=\sum_{k=0}^\infty \lambda^k z^k=\frac{1}{1-\lambda z},
$$
where for the usual multiplication the last equality is valid for $|\lambda z|<1$ . However, for the Volterra composition $\star$ equation \eqref{eqschvolt} becomes
\begin{align}
u&=c_1\psi_0\left( \Omega(\lambda,\wt \rho^{\angl{1}})\star \mathbf 1\right)+\frac{c_2}{\psi_0}\left(\rho^{\angl{1}}\star\Omega(\lambda,\rho^{\angl{1}})\right)\nonumber \\*[2ex] 
&=c_1\psi_0\left(\frac{1}{1-\lambda \wt \rho^{\angl{1}}}\star \mathbf 1\right)+\frac{c_2}{\psi_0}\left(\frac{\rho^{\angl{1}}}{1-\lambda \rho^{\angl{1}}}\right) 
\label{quasifin}
\end{align}
and the geometric series $\Omega(\lambda,\wt \rho^{\angl{1}})$,  $\Omega(\lambda,\rho^{\angl{1}})$ converge for all values of $x_0,x$ and $\lambda$. Furthermore, by definition of the Volterra composition the first term of equation \eqref{quasifin} is given by
$$
c_1\psi_0\left(\frac{1}{1-\lambda \wt \rho^{\angl{1}}}\star \mathbf 1\right)=c_1\psi_0 \int_{x_0}^x \frac{1}{1-\lambda \wt \rho^{\angl{1}}(x_0,\xi)}d\xi
$$
which complete the proof.
\end{proof}

\begin{remark}
The solution representation $\wt \psi(x)$ of the supersymmetric partner Schr\"odinger equation can be obtained by applying operator $\mathcal R_{\Phi}$ in Theorem \ref{repsolvol}, i.e.
$$
\wt \psi(x)=\frac{\wt c_1}{\psi_0} \int_{x_0}^x \frac{1}{1-\lambda \rho^{\angl{1}}}d\xi
+\wt c_2\psi_0\left(\frac{\wt \rho^{\angl{1}}}{1-\lambda \wt \rho^{\angl{1}}}\right).
$$

\end{remark}

\subsubsection*{Acknowledgement}
M. O. acknowledges scholarships from NSERC and FRQNT, where a part of this work was done, and one from the Institut des Sciences Math\'ematiques (ISM). S. T. thanks the CRM, Universit\'e de Montr\'eal, for its hospitality where part of this work was done during his sabbatical year.

\newpage

\bibliography{bib_art1_math}{}
\bibliographystyle{plain}

\end{document}